\documentclass{article}
\usepackage[margin=1in]{geometry}
\usepackage{graphicx}
\usepackage{amsmath}
\usepackage{amsthm}
\usepackage{amssymb}
\usepackage{physics}
\usepackage{xcolor}
\usepackage{diagbox}
\usepackage{xspace}
\usepackage{multirow}
\usepackage{hyperref}
\usepackage{subcaption}

\definecolor{darkred}{RGB}{139, 0, 0}


\newcommand{\queryamount}[0]{\ensuremath{Q}\xspace}
\newcommand{\abilityamount}[0]{\ensuremath{P}\xspace} 

\newcommand{\nonadaptivelabel}[0]{\ensuremath{nq}\xspace}
\newcommand{\pdqporacle}[0]{\ensuremath{\mathcal{Q}_P}\xspace}
\newcommand{\quantumstate}[1]{\ensuremath{\ket{\phi_{#1}}}\xspace}
\newcommand{\unitarygate}[2][]{\ensuremath{U^{#1}_{#2}}\xspace}
\newcommand{\measurementgate}[2][]{\ensuremath{M^{#1}_{#2}}\xspace}
\newcommand{\parallelregisters}[0]{\ensuremath{r_i}\xspace}
\newcommand{\kinputs}[1]{\ensuremath{#1}-inputs}

\newcommand{\compclass}[2][]{\ensuremath{\mathsf{#2}^{#1}}\xspace}
\newcommand{\pdqpname}[0]{\compclass{PDQP}}
\newcommand{\pdqpnaqname}[0]{\compclass[\nonadaptivelabel]{PDQP}}
\newcommand{\bqpname}[0]{\compclass{BQP}}
\newcommand{\postbqpname}[0]{\compclass{PostBQP}}
\newcommand{\dqpname}[0]{\compclass{DQP}}
\newcommand{\copybqpname}[0]{\compclass{CBQP}}

\newtheorem{theorem}{Theorem}[section]
\newtheorem{corollary}[theorem]{Corollary}
\newtheorem{lemma}[theorem]{Lemma}
\newtheorem{definition}[theorem]{Definition}

\newtheorem{proposition}[theorem]{Proposition}
\newtheorem{result}[theorem]{Result}

\usepackage{enumitem}
\usepackage{tikz}
\usetikzlibrary{quantikz2}
\bibliographystyle{plainurl}

\title{New Lower-bounds for Quantum Computation with Non-Collapsing Measurements\footnotetext{This work is supported by US
Department of Energy (grant no DE-SC0023179) and partially supported by US National Science Foundation
(award no 1954311).}\footnotetext{We want to thank Kunal Marwaha for helpful discussions. We thank the anonymous reviewers for their valuable comments and suggestions.}}

\author{David Miloschewsky\\\small{Department of Computer Science, }\\
\small{Stony Brook University}\\
\small{dmiloschewsk@cs.stonybrook.edu}\and Supartha Podder
\\
\small{Department of Computer Science, }\\
\small{Stony Brook University}\\\small{supartha@cs.stonybrook.edu}}

\date{}

\begin{document}

\maketitle

\begin{abstract}
Aaronson, Bouland, Fitzsimons and Lee~\cite{space_above_bqp} introduced the complexity class \pdqpname (which was original labeled \compclass{naCQP}), an alteration of \bqpname enhanced with the ability to obtain non-collapsing measurements, samples of quantum states without collapsing them. Although $\compclass{SZK}\subseteq \pdqpname$, it still requires $\Omega(N^{1/4})$ queries to solve unstructured search.

We formulate an alternative equivalent definition of \pdqpname, which we use to prove the positive weighted adversary lower-bounding method, establishing multiple tighter bounds and a trade-off between queries and non-collapsing measurements. We utilize the technique in order to analyze the query complexity of the well-studied majority and element distinctness problems. Additionally, we prove a tight $\Tilde{\Theta}(N^{1/3})$ bound on search. Furthermore, we use the lower-bound to explore \pdqpname under query restrictions, finding that when combined with non-adaptive queries, we limit the speed-up in several cases.
\end{abstract}

\section{Introduction}

\subsection{Background and motivation}

The study of the quantum query complexity model has lead to a rewarding avenue of research in understanding quantum speed-ups~\cite{shor_algorithm, bbbv_97, polynomial_method, grover_algorithm}. In this model, one is given access to an input $x$ through a black-box and the goal is to evaluate $f(x)$ for some function $f$ while minimizing the number of quantum queries to $x$. A variety of tools have been developed to lower-bound quantum query complexity, notably the polynomial method~\cite{polynomial_method}, where one uses the fact that a polynomial may describe the amplitudes of a quantum state, the hybrid method~\cite{bbbv_97} and its generalization the adversary method~\cite{adversary_original, adversary_methods_equivalent}, which quantifies the difficulty of distinguishing between inputs with different images. The study and application of these techniques has been instrumental in understanding the complexity class characterizing quantum computation, \bqpname  (Bounded-error Quantum Polynomial-time). Specifically, as one may be given access to $x$ through an oracle $O$, quantum query complexity studies \compclass[O]{BQP} and its relations to other complexity classes with access to $O$.

One of the goals of studying \bqpname is to characterize the classes of functions which may be efficiently computable once fault-tolerant quantum computers have been realized. The difficulty of this realization and the need to mimic our current hardware has lead us to place various restrictions on \bqpname . One prominent issue, the presence of noise, motivated the study of noise in the quantum query complexity setting, i.e., to introduce noise to the oracle calls. This was explored in~\cite{ addendum_noisy_oracle_search}, where the authors showed that Grover's algorithm loses its speed-up even if only a single qubit is randomly affected with depolarizing noise. Furthermore,~\cite{nisq_computation} defined the class \compclass{NISQ} (Noisy Intermediate-Scale Quantum), the class of problems solved by a \compclass{BPP} algorithm with access to a noisy quantum device, emulating some aspects of today's state of quantum computation. They showed an oracle separation between both \compclass{BPP} and \bqpname, and showed that \compclass{NISQ} algorithms cannot solve search as fast as \bqpname. 
These results point to the conclusion that the presence of noise considerably weakens the computational power.

Another intriguing limited setting of the query model is the non-adaptive query model~\cite{nonadaptive_weighted_adversary}. Here we restrict the model to a single round a parallel queries and analyze the necessary width of the round, imposing that the queries are independent of each other. It was shown in~\cite{nonadaptive_weighted_adversary} that this restriction thwarts the quantum speed-up in solving the search and element distinctness problems, implying that adaptivity is necessary in order to obtain polynomial speed-ups for some problems.

 On the other hand, augmenting \bqpname with a metaphysical ability have also been a fruitful avenue in understanding the power of computation. 
 One of the first considered variations of quantum mechanics was adding post-selection to quantum computation, \postbqpname (Post-selected \bqpname). Post-selection is the ability that, given a quantum state, we may \emph{select} a partial measurement outcome, regardless of its probability. It was shown that $\postbqpname=\compclass{PP}$~\cite{pp_postbqp}, establishing an equivalency between a quantum and classical computational class. However, \postbqpname is well above \bqpname as even \compclass{QMA}, a quantum version of \compclass{NP}, is contained in \compclass{PP}.

One of the theories explaining quantum mechanics is the hidden variable theory, an interpretation where the system is described by a state and a hidden variable which stochastically determines the measurement results. \dqpname (Dynamical Quantum Polynomial-time) is defined as the set of languages solved by a classical deterministic polynomial-time Turing machine which can process the history of the hidden variables of a quantum algorithm~\cite{dqp}. At the moment, we only know that $\bqpname\subseteq\dqpname\subseteq\compclass{EXP}$ and the best known search algorithm uses $O(N^{1/3})$ queries. However,~\cite{space_above_bqp} showed that adjusting \dqpname to \compclass{CDQP}(Circuit-indifferent \dqpname), adding a restriction to the hidden variable theory, gives us a $\Omega(N^{1/4})$ lower-bound on search.

A well-studied offspring of \dqpname is \pdqpname (Product Dynamical Quantum Polynomial-time), quantum computation powered with non-collapsing measurements. By non-collapsing measurements, we mean that given an arbitrary state $\ket{\psi}$, we sample a measurement result without collapsing the state.
We may view this as a simplification of \dqpname as we are given samples from the history of computation. Defined in~\cite{space_above_bqp}, the authors proved a $\Omega(N^{1/4})$ lower-bound on search and $\compclass{SZK}\subseteq\pdqpname\subseteq\compclass[\mathsf{PP}]{BPP}$.

A class related to \pdqpname is \copybqpname (Copy \bqpname), \bqpname with the ability to create an unentangled copy of any state. Note that \copybqpname can simulate non-collapsing measurement by creating a copy and measuring it. However, the ability to entangle copies together adds considerable power. It was shown that \copybqpname can solve any \compclass{NP}-hard problem in polynomial time using a single query and $O(\log (N))$ copies~\cite{grover_search_no_signaling_principle}. The class is positioned between  \compclass[\mathsf{PP}]{BQP} and \compclass{PSPACE}~\cite{rewindable_bqp}.

In our work, we revisit \pdqpname as, except for \compclass{CDQP}, it is known to be only slightly stronger than \bqpname as it may at most attain a polynomial speed-up over Grover's algorithm. To understand the lower-bound, we must first describe a caveat. In the standard quantum query model, the goal is to minimize the amount of queries \queryamount. However, in models with an additional ability, such as creating a copy or performing a non-collapsing measurement, we must provide bounds on the uses of the ability, which we denote by \abilityamount. In the case of non-collapsing measurements, without minimizing both \queryamount and \abilityamount, every problem of size $N$ could be trivially solved using a single query and followed by exponentially-many non-collapsing measurements which are repeated until we learn enough information about the input. Furthermore, unrestricted non-collapsing measurements lead to quantum cloning, faster than light communication and constant communication complexity~\cite{space_above_bqp}. Therefore the measure of complexity is $\queryamount + \abilityamount$, the sum of the number of queries and number of non-collapsing measurements.

The search lower-bound may lead some to the conclusion that a \pdqpname algorithm with $\abilityamount$ non-collapsing measurements may be simulated by \bqpname with a squared overhead by running $\abilityamount$ separate computations in parallel. However, this intuition fails as the \emph{deferred measurement principle} does not hold in \pdqpname. To explain why, consider the collision problem. Here, assuming $N\in \mathbb{N}$, one is given black-box access to a function $f:[N] \rightarrow [N]$ with the promise that for any $x\in [N]$, $k = |\{y: f(y)=f(x)\}|$ is either 1 or 2. The algorithm is as follows. First, we query all the indices into a superposition and call the oracle, obtaining,
\begin{align*}
    \ket{\psi^\prime}=\sum_{x\in [N]} \ket{x}\ket{f(x)}
\end{align*}
Next, we measure the query output register, collapsing $\ket{\psi^\prime}$ to $\ket{\psi}$ and obtaining $f(x_1)$ as our measurement result. We find ourselves with the following state,
\begin{align*}
    \ket{\psi} = \begin{cases}
        \ket{x_1}\ket{f(x_1)} &\text{if }k=1\\
        \frac{\ket{x_1} + \ket{x_2}}{\sqrt{2}}\ket{f(x_1)} &\text{if }k=2
    \end{cases}
\end{align*}

By repeating $O(1)$ non-collapsing measurements of $\ket{\psi}$, we find $k$ with high probability, solving the problem with both constant queries and non-collapsing measurements. On the other hand, without the partial measurement, the algorithm is no longer efficient. In contrast, $\Tilde{\Theta}(N^{1/3})$ queries are optimal in \bqpname~\cite{collision_problem_algorithm, collision_element_distinctness_lower_bound}. Note that the proof of $\compclass{SZK}\subseteq \pdqpname$ uses the same partial measurement trick to solve the Statistical Difference problem, which is $\compclass{SZK}\text{-hard}$~\cite{szk_hard_problem}.

When it was introduced, \pdqpname was referred to as \compclass{naCQP} (non-adaptive Collapse-free Quantum Polynomial-time). The class \compclass{naCQP} is a restriction of \compclass{CQP}, the class with the ability to adapt the quantum computation based on the non-collapsing measurement results. The power of this class is unknown as there are no known algorithms which utilize the adaptive feature nor do we have any lower-bounds. We note that that the phrase \emph{non-adaptive} in \compclass{naCQP} is not related to non-adaptive queries discussed above, \compclass{naCQP} has the ability to perform a query at any step of the algorithm.

Recently, non-collapsing measurements gained renewed interest. It was shown that the complexity class \compclass{PDQMA} (Product Dynamical Quantum Merlin-Arthur), the quantum counter-part of \compclass{NP} with the power of non-collapsing measurements, equals \compclass{NEXP}~\cite{pdqma_equals_nexp, pdqma_single}. Inspired by their work and several open directions in the original \pdqpname literature, we revisit the class. 
There are several settings to study \pdqpname in. One may focus on adapting unitaries based on non-collapsing measurement results, restricting the query adaptivity or limiting partial measurements. We explore the power of non-collapsing measurements with non-adaptive queries. To avoid any confusion, for the remainder of the paper, any discussion of non-adaptability will be concerned with queries.

It is important to stress that the point of these adjustments is not to argue for alterations of the postulates of quantum mechanics. Instead, we explore quantum computation and complexity from an alternative point. Therefore, the models we define to encompass the adjustments of \bqpname may use ideas which are not inherently present in quantum computation. This is unavoidable as if they could be described solely using the postulates of quantum mechanics, their complexity class would be equal to \bqpname.

\subsection{Results}\label{subsection_results}

\subsubsection{Lower-bounding methods}

We obtain a trade-off between queries and non-collapsing measurements by developing the adversary method lower-bound in our setting. For simplicity, our results are presented using the basic adversary method from~\cite{adversary_original}, but we prove its generalization, the positive weighted adversary method of~\cite{weighted_adversary} (see Subsection~\ref{subsection_proof_application}). When discussing query complexity, we will use the name of the complexity class as the identifier of the model. We note that in all cases, one recovers the original adversary method by setting the number of non-collapsing measurements to 1, essentially only measuring the end state. The result below exhibits that non-collapsing measurements provide us with a limited speed-up over \bqpname in problems which may be lower-bounded by the positive weighted adversary method.

\begin{result}[\textbf{Adversary method on \bqpname with non-collapsing measurements}, directly from Lemma~\ref{lemma_weighted_adversary_partial_measurement}]\label{result_adversary}
    Let $f:\{0,1\}^{|\Sigma|}\rightarrow \{0,1\}$ be a boolean function and $x$ an input.
    Define $X\subseteq f^{-1}(0)$, $Y\subseteq f^{-1}(1), R\subseteq X\times Y$. Based on $X,Y, R$, let $m, m^\prime$ be the minimum number of elements $(x,y)$ for each $x\in X$ and $y\in Y$ respectively and $l,l^{\prime}$ are the most elements $(x,y)$ where $x(i)\neq y(i)$ for each $i\in\Sigma$ and for each $x\in X$ and $y\in Y$ respectively.

    Any algorithm solving $f(x)$ in a \pdqpname setting with \queryamount queries and \abilityamount non-collapsing measurements satisfies,
    \begin{align*}
        \queryamount \abilityamount = \Omega \left( \sqrt{\frac{mm^\prime}{ll^\prime}} \right)
    \end{align*}
\end{result}

We develop a similar trade-off between \queryamount and \abilityamount in the non-adaptive query setting. Defined in Subsection~\ref{subsection_nonadaptive_model}, we denote it with the superscript \nonadaptivelabel.

\begin{result}[\textbf{Adversary method on non-adaptive \bqpname with non-collapsing measurements}, directly from Lemma~\ref{lemma_nonadaptive_final}]
    Define $f, x, m, m^\prime, l, l^\prime$ as in Result~\ref{result_adversary}. Any algorithm solving $f(x)$ in a \pdqpnaqname setting with \queryamount queries and \abilityamount non-collapsing measurements satisfies,
    \begin{align*}
        \queryamount \abilityamount = \Omega \left(\max \left\{ \frac{m}{l}, \frac{m^\prime}{l^\prime} \right\} \right)
    \end{align*}
\end{result}

Additionally, we obtained a variation of Result~\ref{result_adversary} for \copybqpname by applying a similar proof technique.

\begin{result}[\textbf{Adversary method on \bqpname with the ability to copy states}, directly from Lemma~\ref{lemma_copy_adversary}]\label{result_copy}
    Define $f, x, m, m^\prime, l, l^\prime$ as in Result~\ref{result_adversary}. Any algorithm solving $f(x)$ in a \copybqpname setting with \queryamount queries and \abilityamount copies satisfies,
    \begin{align*}
        Q \cdot 2^P = \Omega \left( \sqrt{\frac{mm^\prime}{ll^\prime}} \right)
    \end{align*}
\end{result}

\subsubsection{Applications of the lower-bounding methods}

\begin{table}[t!]
    \centering
    \renewcommand{\arraystretch}{1.3}

\begin{tabular}{|c|c|c|c|}
    \hline
    \diagbox{Problem}{Setting}    &  \bqpname & \pdqpname & \pdqpnaqname  \\
        \hline
    \rule[0pt]{0pt}{\heightof{A}+1.5ex} \textbf{Unstructured search}      & $\Tilde{\Theta}(\sqrt{N})$ & $\color{darkred}\Tilde{\Theta}(N^{1/3})$ & $\color{darkred}\Tilde{\Theta}(\sqrt{N})$  \\ \hline
    \textbf{Majority/Parity}    & $\Theta(N)$ & $\color{darkred}\Tilde{\Theta}(\sqrt{N})$ & $\color{darkred}\Tilde{\Theta}(\sqrt{N})$   \\\hline
     \rule[0pt]{0pt}{\heightof{A}+1.5ex}\textbf{Collision}   & $\Tilde{\Theta}(N^{1/3})$ & $\Theta(1)$ & $\Theta(1)$ \\ \hline
    \multirow{2}{*}{\textbf{Element distinctness}} & \multirow{2}{*}{$\Tilde{\Theta}(N^{2/3})$} & $\color{darkred}O(\sqrt{N})$ &  \multirow{2}{*}{$\color{darkred}\Tilde{\Theta}(\sqrt{N})$}\\
     & & $\color{darkred}\Omega(N^{1/4})$  & \\ \hline
\end{tabular}
    \caption{Summary of bounds on all the computational models explored. The non-trivial bounds we establish or improve in this paper are colored \textcolor{darkred}{red}. The $\mathsf{BQP}$ bounds are added for reference.}
    \label{table_bounds}
\end{table}

We apply the results above to specific problems in order to draw inference about \pdqpname. Specifically, we consider the bounds on search, majority and element distinctness in the settings discussed above. A summary of the bounds is in Table~\ref{table_bounds}. Proof of the applications of the bounds above may be found in Subsection~\ref{subsection_proof_application}.

In the search problem, we are given access to a function $f:[N]\rightarrow \{0,1\}$ and the goal is to decide whether there exists an $x\in [N]$ such that $f(x)=1$. The lower-bound on search in \pdqpname was one of the main results in~\cite{space_above_bqp}, but it was not tight. As the method in Result~\ref{result_adversary} provides us with the same lower-bound, we prove a tighter bound in Lemma~\ref{lemma_lowerbound_search}.

\begin{result}[\textbf{Bounds on search}]
    We obtain tight bounds for unstructured search. In \pdqpname, the bound is $\Tilde{\Theta}(N^{1/3})$, while in \pdqpnaqname it is $\Tilde{\Theta}(\sqrt{N})$.
\end{result}

Next, we consider the majority problem. In this problem, assuming $n\in \mathbb{N}$ and $N=2^n$, we are given $f:[N]\rightarrow \{0,1\}$ and must decide whether $s=|\{x: f(x)=1\}| \geq 2^{n-1}$. As majority may be solved efficiently in \compclass{PP}, we obtain an oracle separation between \compclass{PP} and \pdqpname.\footnote{The lower-bound on search implies a separation between \compclass{NP} and \pdqpname, therefore implying this separation. However, we don't expect majority to separate \pdqpname and \compclass{NP} nor \compclass{QMA}.} Furthermore, the parity problem, where one must decide $\oplus_i x_i$, has the same exact query complexity. This implies the existence of an oracle $O$ such that $\compclass[O]{\oplus P}\not\subseteq \compclass[O]{PDQP}$.

\begin{result}[\textbf{Bounds on majority}]
    We obtain tight bounds for the majority and parity problems. In both \pdqpname and \pdqpnaqname, they are $\Tilde{\Theta}(\sqrt{N})$.
\end{result}

While the collision problem has $O(1)$ bounds, it is interesting that the same is not the case for the element distinctness problem. The problem is to decide, given $f:[N]\rightarrow [N]$, whether there exist $x,y$ such that $x\neq y$ and $f(x)=f(y)$. In \bqpname, we have a clear connection between the two classes, as their bounds may be derived from each other~\cite{childs_lecture_notes}. However, this is not the case in any \pdqpname setting. It would be very interesting to characterize what causes this difference. 

\begin{result}[\textbf{Bounds on element distinctness}]
    We obtain bounds for the element distinctness problem. In \pdqpname, the bound is between $O(\sqrt{N})$ and $\Omega(N^{1/4})$, while in \pdqpnaqname it is $\Tilde{\Theta}(\sqrt{N})$.
\end{result}

Finally, let us consider the \copybqpname setting. ~\cite{grover_search_no_signaling_principle} showed an algorithm which uses a single query and $O(\log(N))$ copies to solve the search problem. Applying Result~\ref{result_copy} shows that this algorithm is optimal.

\subsection{Proof techniques}

We describe the ideas behind our proofs. The key part of this paper is Definition~\ref{definition_pdqp_our}, an alternative definition of the \pdqpname model. It allowed us to prove Lemma~\ref{lemma_fidelity_partial_measurement}, enabling us to ignore partial measurement in the proofs of our lower-bounds.

\begin{figure}
    \centering
    \input{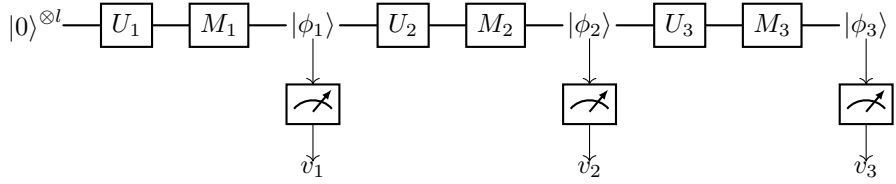}
    \caption{Example of the process of the oracle $\mathcal{Q}_P$ with $P=3$. Given a circuit $C$, the oracle $\mathcal{Q}_P$ returns $\{v_i\}_i^3$ by measuring the states $\{\phi_i\}_i^3$ at each step of $C$.}\label{figure_pdqp_original_definition}
\end{figure}

\subsubsection{Alternative definition of computation with non-collapsing measurements}

Before explaining our alternative definition, we discuss the original definition of the model with non-collapsing measurements from~\cite{space_above_bqp}. \pdqpname is defined as the set of languages solvable by a deterministic Turing machine with one query to an oracle \pdqporacle, which runs a quantum circuit $C$ with \abilityamount steps. At step $i$, when the computation is in the state \quantumstate{i}, the oracle obtains a measurement sample $v_i$ of \quantumstate{i} without collapsing it. At the end, the machine receives $\{v_i\}_i$ and processes it. An example of how the oracle runs $C$ is in Figure~\ref{figure_pdqp_original_definition}. We define an alternative oracle $\pdqporacle^*$ which inputs circuit $C$ and outputs $\{v_i\}_i$ as well. However, we explicitly describe the quantum circuit $C^*$ which $\pdqporacle^*$ runs. The goal is to describe $C^*$ such that it outputs the tensor product of each state before the non-collapsing measurement. Therefore, we may obtain the results by measuring the end state.

Assume the input circuit $C$ is an alternating combination of unitaries \unitarygate{i} and (potentially empty) partial measurement gates \measurementgate{i}. We create the variants \unitarygate[*]{i} and \measurementgate[*]{i} of these gates in $C^*$. The goal is that at every step $i$, we have $r_i = \abilityamount - i + 1$ copies of the state \quantumstate{i} and our gates affect $r_i$ states, each of whose goal is to end in \quantumstate{j} where $j\geq i$. An example of the circuit $C^*$ is in Figure~\ref{figure_pdqp_definition_example}. The unitary \unitarygate[*]{i} is a parallel application of \unitarygate{i} over each of the $r_i$ registers separately. The measurement \measurementgate[*]{i} is more complex as it must ensure that the partial measurement results are the same for all $r_i$ registers in order for them to remain copies of each other. We define it as a gate which may post-select on measuring the same outcome for each register with an adjusted probability distribution.

Let us explain the probability adjustment. Suppose we are measuring a single qubit in each register and want to ensure that we measure $\ket{0}^{\otimes r_i}$ or $\ket{1}^{\otimes r_i}$. Let the probability of measuring 1 at step $i$ of $C$ be $a$. If we post-selected on this to happen in a \postbqpname manner, the probability of measuring $1^{\otimes r_i}$ is $\frac{a^{r_i}}{a^{r_i}+(1-a)^{r_i}}$, not $a$ as desired. Therefore we must adjust the probabilities of measuring $1^{\otimes r_i}$ by $d = a^{r_i-1}$.\footnote{This is the reason why it is difficult to directly show $\pdqpname \subseteq \postbqpname$.} It is important to mention that \measurementgate[*]{i} is not a valid quantum operator as its projectors don't sum to the identity. The only measurement which is allowed by quantum mechanics is the measurement of the states at the end of $C^*$. This is why \pdqpname is more powerful than \bqpname.

\subsubsection{Obtaining the lower-bounding techniques}

Using our alternative definition, we show in Lemma~\ref{lemma_fidelity_partial_measurement} that given two arbitrary states $(\rho_{1,i},\rho_{2,i})$ created by $C^*$, applying \measurementgate[*]{i} cannot decrease the fidelity between them. Our adversary method uses fidelity as the progress measure, implying that we may ignore the effect of partial measurement during our proof. Finally, we obtain a positive weighted adversary method~\cite{weighted_adversary} for our model by closely following the original proof while accounting for the presence of copies of states, obtaining Lemma~\ref{lemma_weighted_adversary_partial_measurement}.

Our remaining lower-bounds use the lemmata we proved above. First, our result on non-adaptive queries with non-collapsing measurements is described in Lemma~\ref{lemma_nonadaptive}. The proof combines the original adversary method proof for non-adaptive queries~\cite{nonadaptive_weighted_adversary}, replacing the original positive weighted adversary with Lemma~\ref{lemma_weighted_adversary_partial_measurement}. Second, our tight $\Omega(N^{1/3})$ lower-bound on search in Lemma~\ref{lemma_lowerbound_search} is a direct extension of the proof on search in \pdqpname with no partial \emph{collapsing} measurements (Appendix E in~\cite{space_above_bqp}) combined with our ability to ignore the effect of partial measurements which comes directly from Lemma~\ref{lemma_fidelity_partial_measurement}.

Additionally, we apply the same idea of using non-standard gates to describe a computational model in order to obtain a lower-bounding technique for quantum computation with copies, \copybqpname. The techniques remain the same as above, except we allow for the entanglement between copies instead of instantly collapsing them.

\begin{figure}
    \centering
    \input{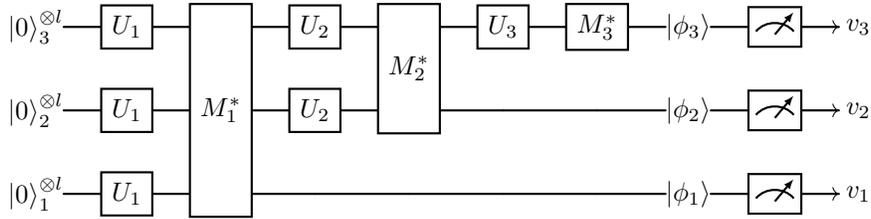}
    \caption{Example of the circuit $C^*$  with $P=3$ which is run by oracle $\mathcal{Q}_P^*$, producing the same results as oracle $\mathcal{Q}_P$.}\label{figure_pdqp_definition_example}
\end{figure}

\subsection{Related work}

Since the introduction of non-collapsing measurements~\cite{space_above_bqp}, there have been several papers analyzing the adjustment in other settings. Specifically, we have $\compclass{PDQP/qpoly}=\compclass{ALL}$~\cite{pdqpqpoly_equals_all} and $\compclass{PDQMA} = \compclass{NEXP}$~\cite{pdqma_equals_nexp, pdqma_single}. Furthermore, it was shown in~\cite{pdqp_oracle_separation} that there exists an oracle $O$ such that $\compclass[O]{P} = \compclass[O]{BQP} = \compclass[O]{SZK} =\compclass[O]{PDQP} \neq (\compclass{UP}\cap \compclass{coUP})^O$ and that relative to a random oracle $O$, $\compclass[O]{NP}\not \subseteq \compclass[O]{PDQP}$. Finally,~\cite{no_collapsing_measurement} studied the effect on quantum theory when we may \emph{only} perform non-collapsing measurements, meaning we may never collapse a quantum state.

Another alteration worth mentioning is adding the power to rewind from a collapsed result back to the state prior to measurement. Denoted \compclass{RwBQP} (Rewindable \bqpname), it was shown to be equal to \copybqpname and \compclass{AdPostBQP} (Adaptive \postbqpname)~\cite{rewindable_bqp}. Adaptivity in \compclass{AdPostBQP} means that based on partial measurement outcomes, we adapt the post-selection. Equivalently, \compclass{RwBQP} may be described as \pdqpname with the additional power to decide whether, once we obtain a non-collapsing measurement, we collapse the state to that sample. This means that adding some form of adaptivity to \postbqpname and \pdqpname makes them equivalent.

Furthermore, \pdqpname without partial measurements may be viewed as \bqpname with parallel queries. Various bounds on this setting have been developed, such as for Grover's algorithm~\cite{grovers_search_optimal} and the general adversary method~\cite{optimal_parallel_query}.

Closely related to parallel queries is adaptivity as non-adaptive algorithms have a single parallel query. Adaptivity of $\mathsf{BQP}$ queries has been a subject of prior research. The positive weighted adversary method was adjusted for non-adaptive queries~\cite{nonadaptive_weighted_adversary} and it was shown that total functions depending on $N$ variables require $\Omega(N)$ queries to the input~\cite{nonadaptive_query_complexity}. Both results showed that non-adaptivity restricts the speed-ups quantum algorithms may achieve. Related to this, the analysis of the number of adaptive queries was studied in~\cite{power_of_adaptivity}, where the authors showed that for any $r\in \mathbb{N}$, the $2r$-fold Forrelation problem can be solved with $r$ adaptive rounds using one query per round, while $r-1$ adaptive round would require an exponential number of queries per round.

\subsection{Open problems}\label{subsection_open_problems}

We list several open questions worth pursuing below.

\begin{itemize}
    \item In \bqpname, the bounds between the collision problem and the element distinctness problem are polynomially related. Our results show that this is not true in \pdqpname. Is it possible to characterize the reason behind this stark difference? We note that it is something which is not recognizable by any \bqpname algorithm.
    \item All efforts to lower-bound the query complexity of \pdqpname used methods derived from \bqpname. Is it possible to develop lower-bounding techniques which are tailored for non-collapsing measurements? Similarly, is it possible to obtain lower-bounds using an altered version of the polynomial method\footnote{We discuss this in Subsection~\ref{subsection_polynomial_method}}?
    \item How does \pdqpname or other metaphysical classes act under a different restrictions. For example, how would it work with noisy computation? Are there restrictions which only affect \bqpname, but not \pdqpname?
    \item What is the power of \pdqpname if it can adapt its unitaries based on the non-collapsing measurement result?
    \item Are there modifications of \bqpname combined with the power of witnesses which are larger than \compclass{NEXP}?
    \item There exist oracle separations between \pdqpname and \compclass{QMA} in both ways\footnote{$\compclass[O]{QMA}\not \subseteq \compclass[O]{PDQP}$ holds by the lower-bound on search, while $\compclass[O]{PDQP}\not \subseteq \compclass[O]{QMA}$ is by an oracle separation between \compclass{SZK} and \compclass{QMA}~\cite{szk_qma_oracle_separation} and $\compclass{SZK}\subseteq \pdqpname$.}, but there are other relationships worth exploring. For example, what is the relationship between $\mathsf{PDQP}$ and advice classes such as $\compclass{BQP/qpoly}$?
\end{itemize}
 
\section{Preliminaries}

We assume familiarity with basic quantum computation concepts; for introductory texts, see~\cite{Nielsen_Chuang, watrous_lecture_notes}.
Let us define the notation used in this paper. Unless mentioned otherwise, $N\in \mathbb{N}$ specifies the size of an arbitrary input. The set $\{1,..,N\}$ is denoted by $[N]$. For a string $s\in \{0,1\}^n$, $s(i)$ denotes the character in position $i$ and the Hamming weight $|s| = |\{s(i): s(i)=1, i\in [n]\}|$. When describing a set $\{a_i\}_i^j$, we implicitly mean $\{a_i\}_{i=1}^j$. On the other hand, $\{a_i\}_i$ indicates a general set enumerated over $i$. The symbols \unitarygate{} and \measurementgate{} denote arbitrary unitary and measurement gates. The Kronecker delta function $\delta_{a,b}$ is a boolean function which is 1 if and only if $a=b$.
A tensor product of $k$ states $\phi$ is denoted by $\phi^{\otimes k}$ in order to avoid confusion with other superscripts.
The symbol $F$ denotes the fidelity measure, while $f$ denotes an arbitrary boolean function. We list several key properties of fidelity.

\begin{itemize}
    \item (\textbf{Uhlmann's Theorem}~\cite{uhlmann_theorem}) Let $\rho_1, \rho_2$ be quantum states and $\ket{\phi_1}$ a fixed purification of $\rho_1$. Then $F(\rho_1, \rho_2) = \max_{\ket{\phi_2}}  \left| \bra{\phi_1}\ket{\phi_2} \right|$ where the maximization is over purifications $\ket{\phi_2}$ of $\rho_2$.
    \item For any states $\rho_1, \rho_2$, $0\leq F(\rho_1, \rho_2)\leq 1$.
    \item (\textbf{Multiplicativity}) For any states $\rho_1, \sigma_1$ and $\rho_2, \sigma_2$, $F(\rho_1\otimes \rho_2, \sigma_1 \otimes \sigma_2 ) = F(\rho_1,\sigma_1 )F(\rho_2,\sigma_2 )$.
    \item (\textbf{Unitary invariance}) For any states $\rho_1, \rho_2$ and an arbitrary unitary $U$, $F(\rho_1, \rho_2) = F(U \rho_1 U^\dagger, U \rho_2 U^\dagger )$.
\end{itemize}

\subsection{Quantum Query Model}

We will be using the quantum query model. In this model, the goal is to compute $f(x)$ for some boolean function and an input $x = (x(0), ...,x(N-1))$. We are given access to $x$ via an oracle $O_x$. We shall assume that we are working with a phase oracle, defined as follows,
\begin{align*}
    O_x: \ket{i,b} \rightarrow (-1)^{x(i)\cdot b}\ket{i,b}
\end{align*}

where $i$ is an query input index of $x$ and $b\in \{0,1\}$. The definition of the quantum query model with non-collapsing measurements is in Section~\ref{section_pdqp_model}. We will use the following lemma.

\begin{lemma}[Hybrid argument of~\cite{bbbv_97}]\label{lemma_hybrid_argument} Let $\ket{\psi_t}$ and $\ket{\psi_t^x}$ be states of an algorithm solving search using $Q$ queries in total after $t$ steps without partial measurement where respectively there is no marked element or $x\in [N]$ is the marked element. Then,
\begin{align*}
    \sum_{x=1}^{N} \norm{\ket{\psi_t} - \ket{\psi_t^x}}_2^2 \leq 4Q^2
\end{align*}
\end{lemma}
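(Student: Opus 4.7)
The plan is to reproduce the standard hybrid argument of Bennett, Bernstein, Brassard, and Vazirani, being careful to extract the stated bound $4Q^2$ rather than the sharper $4Q$ bound one often sees. Throughout, I will compare two runs of the same algorithm: one on the all-zero input (no marked element, giving the state $\ket{\psi_t}$) and one on the indicator vector $e_x$ (the single marked element being $x$, giving $\ket{\psi_t^x}$). Define the \emph{query magnitude} of $x$ at step $i$ to be $q_{x,i} := \sum_{w} |\langle x, 1, w \ket{\psi_{i-1}}|^2$, where $w$ ranges over the workspace basis states. Since the phase oracle $O_{e_x}$ differs from $O_0$ only in flipping the sign of $\ket{x,1,w}$, we get the crucial identity $\|(O_{e_x} - O_0)\ket{\psi_{i-1}}\| = 2\sqrt{q_{x,i}}$.

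Next I would set up the per-step recurrence. At step $i$, if the step is a unitary that does not depend on the input, both runs apply the same unitary and $\|\ket{\psi_i} - \ket{\psi_i^x}\| = \|\ket{\psi_{i-1}} - \ket{\psi_{i-1}^x}\|$, since unitaries preserve norms; the hypothesis of no partial measurement is precisely what lets me make this step, because a collapsing operation could alter the difference in a non-unitary way. At a query step, I would use the triangle inequality,
\begin{align*}
\|\ket{\psi_i} - \ket{\psi_i^x}\|
  &\leq \|O_{e_x}\ket{\psi_{i-1}^x} - O_{e_x}\ket{\psi_{i-1}}\|
     + \|O_{e_x}\ket{\psi_{i-1}} - O_0\ket{\psi_{i-1}}\| \\
  &= \|\ket{\psi_{i-1}^x} - \ket{\psi_{i-1}}\| + 2\sqrt{q_{x,i}},
\end{align*}
and iterate to obtain $\|\ket{\psi_t} - \ket{\psi_t^x}\| \leq 2 \sum_{i=1}^{Q} \sqrt{q_{x,i}}$, where the sum extends over all $Q$ query steps used by the algorithm.

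Then I would apply Cauchy–Schwarz to turn the sum of square roots into a bound involving the sum of the $q_{x,i}$: $\bigl( \sum_{i=1}^{Q} \sqrt{q_{x,i}} \bigr)^2 \leq Q \sum_{i=1}^{Q} q_{x,i}$, which gives $\|\ket{\psi_t} - \ket{\psi_t^x}\|_2^2 \leq 4 Q \sum_{i=1}^{Q} q_{x,i}$. Finally, summing over $x \in [N]$ and swapping the order of summation,
\begin{align*}
\sum_{x=1}^{N} \|\ket{\psi_t} - \ket{\psi_t^x}\|_2^2
  \leq 4 Q \sum_{i=1}^{Q} \sum_{x=1}^{N} q_{x,i}
  \leq 4 Q \cdot Q = 4 Q^2,
\end{align*}
where the last inequality uses that for each fixed $i$, the quantities $\{q_{x,i}\}_{x \in [N]}$ are squared amplitudes on a set of mutually orthogonal basis states, hence sum to at most $\|\ket{\psi_{i-1}}\|_2^2 = 1$.

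The genuinely delicate point is not any single calculation but ensuring that the step-by-step comparison between the two runs is legitimate: the inductive bound relies on every non-query step being unitary and input-independent, so if an intermediate partial measurement were allowed, the two runs could take conditionally different branches and the triangle-inequality recurrence would break. That is precisely why the statement restricts to algorithms without partial measurement. Given that caveat, the rest is mechanical; the bound of $4Q^2$ (rather than $4Q$) arises solely because I use Cauchy–Schwarz before summing over $x$, which is what the later applications of the lemma need.
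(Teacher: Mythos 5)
Your proof is correct and is essentially the canonical BBBV hybrid argument that the paper itself simply cites rather than reproves: query magnitudes, a triangle-inequality recurrence over query steps, Cauchy--Schwarz to get the factor $Q$, and normalization to bound $\sum_x q_{x,i}\leq 1$. Nothing further is needed.
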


The adversary method proof technique is developing an upper-bound on the progress an algorithm makes at one step in order to lower-bound the number of necessary steps. We will use the symbol $\Phi(t)$ as the algorithm progress measure at step $t$. In order to develop our positive weighted adversary lower-bound, we will use the notation from~\cite{weighted_adversary}.

\begin{definition}[Weight scheme of~\cite{weighted_adversary}]\label{definition_weight_scheme}
    Consider an arbitrary boolean function $f$. Let $X\subseteq f^{-1}(0)$, $Y\subseteq f^{-1}(1)$ and $R\subseteq X\times Y$. A weight scheme is defined using $w(x,y)$ and $w^\prime(x,y,i)$ such that for all $(x,y)\in R$ and $i\in [N]$ where $x(i)\neq y(i)$,
    \begin{align*}
        &w(x,y)>0 \\
         &w^\prime(x,y,i)>0 \\
         &w^\prime (y,x,i)>0 \\
         &w^\prime(x,y,i)w^\prime (y,x,i)\geq w^2(x,y)
    \end{align*}
    For a variable $x$, define the weight $wt(x)$ of a variable and the load of a variable $v(x,i)$ as,
    \begin{align*}
        wt(x)&=\sum_{y: (x,y)\in R} w(x,y)\\
        v(x,i) &= \sum_{\substack{y: (x,y)\in R \\ x(i)\neq y(i)}} w^\prime (x,y,i)
    \end{align*}
\end{definition}

\begin{definition}[Weight load of~\cite{weighted_adversary}]\label{definition_weight_load}
    Assume a valid weight scheme as in Definition~\ref{definition_weight_scheme} for an arbitrary $f$. Let $C\in\{X,Y\}$. Then the maximum  $C$-load is defined as,
    \begin{align*}
        v_C = \max_{\substack{x\in C \\ i\in [N]}} \frac{v(x,i)}{wt(x)}
    \end{align*}
    We define the maximum load of $f$ as,
    \begin{align*}
        v_{max} = \sqrt{v_A v_B}
    \end{align*}
\end{definition}

Additionally, we will use the following property from~\cite{weighted_adversary}.
\begin{proposition}[Weight identity, proof of Lemma~4 of~\cite{weighted_adversary}]\label{proposition_weight_identity}
    Assume variables defined as in Definitions~\ref{definition_weight_scheme} and~\ref{definition_weight_load}. Let $\Phi(t) = \sum_{(x,y)\in R} w(x,y) F(\rho_{x,t}, \rho_{y,t})$ and $\alpha_{x,i}$ be the amplitude of $\rho_{x,t}$ associated with the query input index $i$. Then,
    \begin{align*}
        \sum_{\substack{(x,y)\in R \\ i: x(i)\neq y(i)}} 2 w(x,y) \left|\alpha_{x,i} \right| \left|\alpha_{y,i}\right| \leq  v_{max}\Phi(0)
    \end{align*}
\end{proposition}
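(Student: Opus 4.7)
The plan is to replay Ambainis's proof of the weight identity~\cite{weighted_adversary} essentially verbatim, since Proposition~\ref{proposition_weight_identity} is stated as a direct extract. Three ingredients drive the argument: a pointwise application of the weight-scheme inequality, a scaled AM--GM splitting, and a reindexing that collapses the inner sum into the load function $v(x,i)$.

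First I would bound each summand pointwise. For every $(x,y)\in R$ and $i$ with $x(i)\neq y(i)$, the weight-scheme inequality $w(x,y)^2\leq w'(x,y,i)\,w'(y,x,i)$ together with the scaled AM--GM $2AB\leq\beta A^2+\beta^{-1}B^2$, applied with $A=\sqrt{w'(x,y,i)}\,|\alpha_{x,i}|$, $B=\sqrt{w'(y,x,i)}\,|\alpha_{y,i}|$ and free parameter $\beta>0$, gives
\[
2\,w(x,y)\,|\alpha_{x,i}||\alpha_{y,i}| \;\leq\; \beta\,w'(x,y,i)\,|\alpha_{x,i}|^2 \;+\; \beta^{-1}\,w'(y,x,i)\,|\alpha_{y,i}|^2.
\]

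Next I would sum over $(x,y)\in R$ and admissible $i$, and swap summation order. For fixed $x$ and $i$, the inner sum $\sum_{y:(x,y)\in R,\,x(i)\neq y(i)} w'(x,y,i)$ is exactly $v(x,i)$ by Definition~\ref{definition_weight_load}. The load bound $v(x,i)\leq v_X\,wt(x)$ combined with the normalisation $\sum_i|\alpha_{x,i}|^2=1$ collapses the first batch to $\beta\, v_X\sum_{x\in X}wt(x)$; by symmetry the second batch is $\beta^{-1}\, v_Y\sum_{y\in Y}wt(y)$. Both telescopings equal $\sum_{(x,y)\in R}w(x,y)$, which in turn equals $\Phi(0)$ because at step $0$ the computation has not yet touched the oracle, so $\rho_{x,0}=\rho_{y,0}$ and every fidelity is $1$. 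Optimising with $\beta=\sqrt{v_Y/v_X}$ converts $\beta v_X+\beta^{-1}v_Y$ into $2\sqrt{v_X v_Y}=2v_{max}$, yielding the stated inequality up to the universal constant absorbed in the definition of $v_{max}$.

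The only subtle step is the normalisation identity $\sum_i|\alpha_{x,i}|^2=1$, which is what lets me extract $v_X$ from the $i$-sum rather than a larger quantity. In the BQP-style phrasing of Proposition~\ref{proposition_weight_identity} this is automatic, since $\alpha_{x,i}$ is just the amplitude of a single normalised state $\rho_{x,t}$ in the oracle-input basis. The main downstream obstacle, and the place where this proof will need extra care when lifted into the PDQP setting of Lemma~\ref{lemma_weighted_adversary_partial_measurement}, is checking that the amplitudes still live on a single queried register at the moment the oracle acts on the circuit $C^*$, rather than being diluted across the $r_i$ copies; but that bookkeeping belongs to the lemma that invokes this proposition and does not affect the preliminary statement itself.
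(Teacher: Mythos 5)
Your proof is correct and is essentially the argument the paper relies on: the paper imports Proposition~\ref{proposition_weight_identity} by citation without reproving it, and your replay of the standard weighted-adversary computation (weight-scheme inequality, scaled AM--GM with $\beta=\sqrt{v_Y/v_X}$, reindexing into $v(x,i)$, and $\Phi(0)=\sum_{(x,y)\in R}w(x,y)$) is exactly that proof. The only discrepancy is the factor of $2$ (your optimisation yields $2\sqrt{v_Xv_Y}\,\Phi(0)$ rather than $v_{max}\Phi(0)$ as literally stated), which you already flag and which is immaterial since every downstream use is an $\Omega(\cdot)$ bound.
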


\section{Non-collapsing measurement model}\label{section_pdqp_model}

We provide a formal definition of the quantum query model with non-collapsing measurements. We begin with the original model from~\cite{space_above_bqp} and follow it with our alteration. At the start, we consider the case without an oracle, which we add at the end.

The original definition from~\cite{space_above_bqp} is as follows. Let \abilityamount be the total number of non-collapsing measurements made and \pdqporacle be a quantum oracle which takes as an input a quantum circuit $C=(U_1, M_1, .., U_{\abilityamount}, M_{\abilityamount})$ where $U_i$ is a unitary operator on $l$ qubits and $M_i$ is a collapsing measurement operator on $s$ qubits such that $0\leq s\leq l$. Let,
\begin{align*}
    \ket{\phi_0} &=\ket{0}^{\otimes l} \\
    \ket{\phi_{i}} &= M_iU_i\ket{\phi_{i-1}}\\
\end{align*}
Define the oracle \pdqporacle as a quantum oracle which inputs $C$ and outputs $\{\ket{\phi_i}\}_i^{\abilityamount}$. If this oracle is called by a machine which cannot hold quantum states, the machine receives the measurement results $\{v_i\}_i^{\abilityamount}$ of $\{\ket{\phi_i}\}_i^{\abilityamount}$ (i.e. upon receiving $\ket{\phi_i}$, the state immediately collapses to $v_i$).

\begin{definition}[Section 2 of~\cite{space_above_bqp}]\label{definition_pdqp_original}
    \pdqpname is the class of languages $L$ which may be recognized by a polynomial-time deterministic Turing machine with one query to \pdqporacle with probability of error at most $1/3$.
\end{definition}

We propose an alternative definition of the computational model by altering the oracle \pdqporacle to an explicit circuit. Let $\parallelregisters=\abilityamount-i+1$. The circuit $C^*$ starts with \abilityamount $l\text{-qubit}$ registers, each initialized at zero. We label each register with index $i$ which denotes which step of $C$, state $\quantumstate{i}$, they compute. At each step of $C^*$, we apply adjusted gates $U^*_i$ and $M^*_i$ over \parallelregisters registers to all registers $j\geq i$, creating $\{\ket{\phi_i}\}_i$ through parallel computation.

Let $a_{i,n}$ be the probability of measuring $n$ from $M_i$ and $d_{i,n} = \frac{1}{a_{i,n}^{\parallelregisters-1}}$. Let,
\begin{align}
    \ket{\psi_0^*} &= \ket{0}^{\otimes l \abilityamount} \\
    U_i^* &= U_i^{\otimes \parallelregisters} \\
    M_i^* &= \sum_n d_{i,n} (\ket{n}\bra{n})^{\otimes \parallelregisters} \label{eq_our_definition_measurement}\\
    \ket{\psi_{i}^*} &= M_i^*U_i^*\ket{\psi_{i-1}^*}
\end{align}

We define $\pdqporacle^*$ as a quantum oracle which inputs $C$, runs $C^*$ and outputs the final state $\ket{\psi_{\abilityamount}^*} = \otimes_i^{\abilityamount} \ket{\psi_i}$. As with \pdqporacle, if the machine calling $\pdqporacle^*$ cannot process quantum states, it receives $\{v_i\}_i^{\abilityamount}$.

\begin{proposition}\label{proposition_equivalent_models}
    \pdqpname is equivalent to the class of languages $L$ which may be recognized by a polynomial-time deterministic Turing machine with one query to $\mathcal{Q}_P^*$ with error probability at most $1/3$.
\end{proposition}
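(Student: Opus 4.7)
The plan is to reduce the claim to a purely distributional equivalence: for any input circuit $C$, the joint distribution of the classical outputs $(v_1,\ldots,v_P)$ returned by $\mathcal{Q}_P^*$ matches the joint distribution returned by $\pdqporacle$. Since both models wrap a single oracle call inside the same deterministic polynomial-time machine, matching the sample distribution immediately matches the accepting probability, and hence the two complexity classes coincide. The containment is therefore symmetric, so I only need to verify one equality of distributions.

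The core technical step is an induction on the layer index $i$ showing that the unnormalized state produced by $C^*$ after its first $i$ layers decomposes as
\begin{align*}
    \ket{\psi_i^*} \;=\; \sum_{n_1,\ldots,n_i} \sqrt{\prod_{j\le i} a_{j,n_j}}\; \bigotimes_{j=1}^{i}\ket{\phi_j(n_{\le j})}_{\text{reg }j} \;\otimes\; \ket{\phi_i(n_{\le i})}^{\otimes r_{i+1}}_{\text{regs }i+1,\ldots,P},
\end{align*}
where $\ket{\phi_j(n_{\le j})}$ is the post-collapse state that $\pdqporacle$ produces on the branch with measurement outcomes $n_1,\ldots,n_j$ and $a_{j,n_j}$ is the conditional probability of that outcome. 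The inductive step is mechanical: $U_{i+1}^*$ acts as $U_{i+1}$ in parallel on the trailing $r_{i+1}$ registers, which simply rotates the last tensor factor in place, and then $M_{i+1}^* = \sum_n d_{i+1,n}(\ket{n}\bra{n})^{\otimes r_{i+1}}$ forces these registers to coincide on the measured qubits, producing a raw weight of $a_{i+1,n_{i+1}}^{r_{i+1}/2}$ which is rescaled by $d_{i+1,n_{i+1}}$. The choice of $d_{i+1,n}$ from equation~(\ref{eq_our_definition_measurement}) is exactly the correction that converts this raw weight back to $\sqrt{a_{i+1,n_{i+1}}}$, so that the invariant reappears at step $i+1$ with the expected branch probabilities.

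Once the invariant holds at $i=P$, measuring $\ket{\psi_P^*}$ in the computational basis on each register yields
\begin{align*}
    \Pr[v_1,\ldots,v_P] \;=\; \sum_{n_1,\ldots,n_P}\prod_{j=1}^{P} a_{j,n_j}\,|\bra{v_j}\ket{\phi_j(n_{\le j})}|^2,
\end{align*}
which is precisely the distribution obtained by sampling $v_j$ from $\ket{\phi_j}$ in the original model, after marginalizing over the random collapse outcomes $n_j$. The main obstacle is that $M_i^*$ is not a valid quantum operator: since its projectors do not sum to the identity, $\ket{\psi_P^*}$ is unnormalized and the ordinary Born-rule reasoning has to be replaced by a direct amplitude calculation that tracks the rescaling at every layer. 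A secondary subtlety is that $M_i$ is only a partial measurement on $s\le l$ qubits, so each register must be decomposed into its measured and unmeasured subsystems before applying $(\ket{n}\bra{n})^{\otimes r_i}$ and checking that the unmeasured parts evolve identically across all trailing copies.
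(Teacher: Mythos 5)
Your overall strategy is the same as the paper's: since both oracles take the same input circuit, it suffices to show that the classical outputs $\{v_i\}$ are identically distributed, with the coefficients $d_{i,n}$ doing the work of undoing the post-selection bias. You are in fact more careful than the paper on one point: the paper's proof only verifies that each \emph{marginal} $v_i$ is distributed correctly, whereas your layer-by-layer induction tracks the full joint distribution over collapse histories $(n_1,\dots,n_P)$, which is what is actually required since the $v_i$ in the original model are correlated through the shared collapse outcomes. (To pass from your invariant to your displayed probability formula you also need that distinct histories yield mutually orthogonal tensor products --- this holds because the first register at which two histories differ still carries the collapsed outcome on its measured qubits --- but that is a small omission.)

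The genuine problem is the arithmetic of your inductive step. With $d_{i,n}=a_{i,n}^{-(r_i-1)}$ applied as a multiplier of the \emph{amplitude}, the raw branch weight $a_{i,n}^{r_i/2}$ becomes $a_{i,n}^{r_i/2}\cdot a_{i,n}^{-(r_i-1)}=a_{i,n}^{1-r_i/2}$, which equals $\sqrt{a_{i,n}}$ only when $r_i=1$; so the claim that the $d_{i,n}$ of Line~\ref{eq_our_definition_measurement} is ``exactly the correction'' restoring $\sqrt{a_{i,n}}$ is false as written, and your invariant is not preserved for $r_i\geq 2$. The correction that works at the amplitude level is $\sqrt{d_{i,n}}=a_{i,n}^{-(r_i-1)/2}$. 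The paper sidesteps this by applying $d_{i,n}$ a single time at the \emph{probability} level, computing $p^*(n)=d_{i,n}\,a^{r_i}=a$ --- consistent with the intended meaning of $d_{i,n}$ as a probability reweighting, but not with $M_i^*$ literally acting on kets as $\sum_n d_{i,n}(\ket{n}\bra{n})^{\otimes r_i}$ inside $\ket{\psi_i^*}=M_i^*U_i^*\ket{\psi_{i-1}^*}$. To repair your argument, either carry $\sqrt{d_{i,n}}$ through the amplitude calculation (noting the discrepancy with the operator as defined), or restate your invariant in terms of the probability weights attached to each branch rather than the amplitudes; with that fix the induction goes through and your proof is complete.
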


\begin{proof}
    By Definition~\ref{definition_pdqp_original}, $\pdqpname=\compclass[\pdqporacle,1]{BPP}$ and we want to show $\pdqpname=\compclass[\pdqporacle^*,1]{BPP}$. As both $\pdqporacle$ and $\pdqporacle^*$ have the same inputs, we only require their outputs $\{v_i\}_i^\abilityamount$ and $\{v_i^+\}_i^\abilityamount$ are sampled from the same distribution. Consider an arbitrary pair $v_i$ and $v_i^+$ which are the respective measurements of $\ket{\psi_i}$ and $\ket{\psi_i^+}$ where $\ket{\psi_i^+}$ is the resulting state on register $i$ of $C^*$. As the application of $U^*$ is $U_i$ in parallel, only $U_i$ is applied to register $i$ in $C^*$. Therefore we only require that the probability distributions $p$ and $p^*$ of the measurement operators on $\ket{\psi_i}$ and $\ket{\psi_i^+}$ are equivalent.

    Suppose that for $\ket{\psi_i}$, the probability of measuring an arbitrary string $n$ during $M_i$ is $p(n)=a$. For $\ket{\psi_{i}^*}$, we have that the probability of measuring $n^{\otimes \parallelregisters}$, the equivalent of $n$ in $C^*$, is,
    \begin{align*}
        p^*(n) =  d_{i,n} \bra{\psi_{i}^*}\ket{n^{\otimes \parallelregisters}}\bra{n^{\otimes \parallelregisters}}\ket{\psi_{i}^*} = d_{i,n}\;a^{\parallelregisters} = a
    \end{align*}
    As the probability distributions $p$ and $p^*$ are equivalent, the oracles produce equivalent outputs.
\end{proof}

We shall add queries to our model. Let $Q_{i,x}$ the oracle operator applied at step $i$ for input $x$, \queryamount be the number of queries performed and $\parallelregisters = P + Q - i + 1$. In the quantum oracle model with non-collapsing measurements, we insert the oracle gates after the unitary $U_i$ and measurement operator $M_i$. Without loss of generality, we may assume that we don't obtain a non-collapsing measurement at step $i$ if we use $Q_i$ at this step. This enables the circuit to choose when to perform queries and non-collapsing measurements. In $\pdqporacle^*$, this means that we apply a gate $Q_i^*$ in parallel over \parallelregisters registers. You may see examples of the circuits being run by \pdqporacle and $\pdqporacle^*$ in Figure~\ref{figure_pdqp_query}. 

\begin{definition}\label{definition_pdqp_our}
    The \pdqpname query model creates a circuit $C$ based on the input $x$, provides it to \pdqporacle which runs the circuit $C^*$. The query complexity of the model is measured by the number of queries \queryamount (steps of the algorithm which perform a query) and non-collapsing measurements \abilityamount (steps which don't perform a query) in $C$.
\end{definition}

\begin{figure}
    \centering
    \input{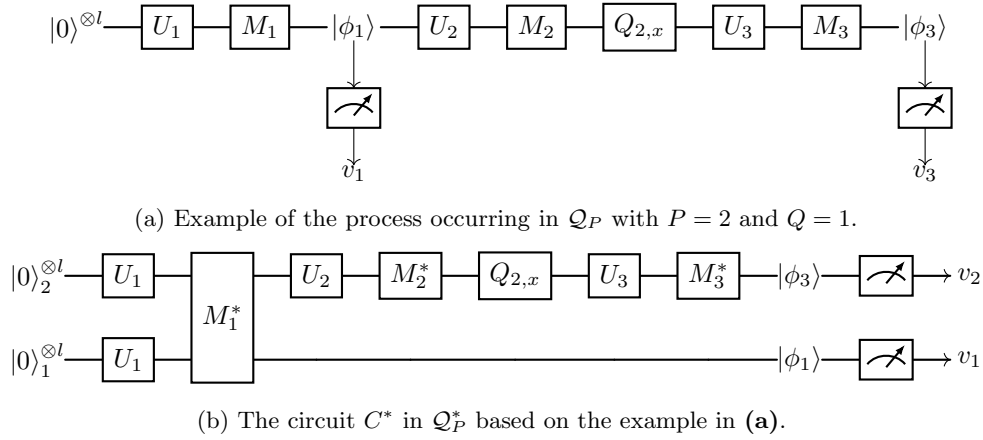}
    \caption{Examples of the quantum oracle model with non-collapsing measurements based on \textbf{(a)} Definition~\ref{definition_pdqp_original} and \textbf{(b)} Definition~\ref{definition_pdqp_our}.}
    \label{figure_pdqp_query}
\end{figure}

By using the definition above, we find that the query complexity of the two models is equivalent.

\begin{lemma}
    The query complexity of the original \pdqpname query model is equivalent to the query complexity of the adjusted \pdqpname query model.
\end{lemma}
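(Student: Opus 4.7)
The plan is to reduce the lemma to a straightforward extension of Proposition~\ref{proposition_equivalent_models} by treating the oracle calls explicitly. By construction of $C^*$, each step of $C$ is mapped one-to-one to a step of $C^*$: a unitary step becomes a parallel unitary $U_i^{\otimes r_i}$ applied to the active registers $j\geq i$, a non-collapsing measurement step becomes the reweighted diagonal operator $M_i^*$, and a query step becomes $Q_{i,x}^{\otimes r_i}$. Since the step labels coincide, the count $\queryamount$ of query steps and the count $\abilityamount$ of non-collapsing measurement steps in $C^*$ agree with those in $C$, so the complexity measures of the two models match by definition. It therefore suffices to show that the oracles $\pdqporacle$ and $\pdqporacle^*$ produce samples $\{v_i\}_i^{\abilityamount}$ from the same joint distribution whenever queries are interleaved with measurements.

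I would prove this by induction on the step index $i$, with induction hypothesis that after step $i$ of $C^*$ the reduced state on register $i$ is exactly $\ket{\phi_i}$ (the state produced by $C$ at step $i$), and that the joint distribution of the measurement outcomes $v_1,\dots,v_i$ obtained so far from $\pdqporacle^*$ matches the corresponding marginal from $\pdqporacle$. The base case follows because every active register begins in $\ket{0}^{\otimes l}$ and is hit by $U_1$. For the inductive step I would case-split on whether step $i$ is a unitary, a query, or a non-collapsing measurement. The first two cases are immediate because $U_i^*$ and $Q_i^*$ act as $U_i$ and $Q_{i,x}$ independently on register $i$, and the inductive hypothesis on the previous state carries through. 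The measurement case is the only nontrivial one and is handled exactly as in Proposition~\ref{proposition_equivalent_models}: the adjustment factor $d_{i,n}=1/a_{i,n}^{r_i-1}$ is precisely what cancels the extra $r_i-1$ powers of $a_{i,n}$ coming from the additional parallel registers, yielding marginal probability $a_{i,n}$ on register $i$ as required.

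The main obstacle is bookkeeping in the measurement case in the presence of queries. Unlike in Proposition~\ref{proposition_equivalent_models}, the state at step $i$ is not of a clean product form because earlier queries can entangle the query input register with the rest of register $i$. However, since $U_j^*$ and $Q_j^*$ act as the same local operator on each active register in parallel and the earlier $M_j^*$ post-select on the same outcome across registers, the state immediately before $M_i^*$ is supported on the ``diagonal'' subspace spanned by $\ket{\xi}^{\otimes r_i}$-type vectors (up to overall normalization tracked by the earlier $d_{j,n}$ factors). Formally I would show by induction that the normalized reduced state on any active register equals $\ket{\phi_{i-1}}$ (after the appropriate $U_i^*$/$Q_i^*$), so the probability of $\pdqporacle^*$ yielding outcome $n^{\otimes r_i}$ on $M_i^*$ is $d_{i,n}\,a_{i,n}^{r_i}=a_{i,n}$, matching the probability that $\pdqporacle$ outputs $v_i=n$ conditioned on prior outcomes. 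The without-loss-of-generality clause that no non-collapsing measurement is performed at a query step ensures $d_{i,n}$ appears only when needed and is always well-defined. Combining the three cases completes the induction and hence the lemma.
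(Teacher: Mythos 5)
Your proposal is correct and follows essentially the same route as the paper: a gate-for-gate correspondence between $C$ and $C^*$ that preserves the counts of query steps and non-collapsing-measurement steps, with the distributional equivalence of the oracle outputs deferred to the argument of Proposition~\ref{proposition_equivalent_models}. The paper states this in two sentences and leaves the interleaved-query bookkeeping implicit, whereas you make the induction and the cancellation $d_{i,n}\,a_{i,n}^{r_i}=a_{i,n}$ explicit; this is a more careful rendering of the same argument, not a different one.
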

\begin{proof}
    Suppose that we query a circuit which performs \queryamount queries and \abilityamount non-collapsing measurements to the original oracle \pdqporacle. By our explanation of the query model above, we know that $\pdqporacle^*$ simulates the queries $Q_i$ by the query gate $Q_i^*$, obtaining the same query complexity.

    On the other hand, suppose we have a circuit that is run by $\pdqporacle^*$. Each unitary $U^*_i$, measurement $M^*_i$ and query $Q^*_i$ may be run by the original oracle \pdqpname by only running the gates on the first register, leading to the same complexity.
\end{proof}

We note that one obtains that the gate complexity of the models is equal in a similar way. Furthermore, it is important to mention that by going from \pdqporacle to $\pdqporacle^*$ and vice-versa under the same input, we adjust the width of the circuits while maintaining equal depth. By the characterization of fidelity using Bhattacharyya coefficients~\cite{fidelity_bhattacharyya_coeffs}, we know that fidelity between states may not decrease after measurement. We find that the same holds for our adjusted measurement gates, allowing us to restrict the power of $M^*$ in this model.

\begin{lemma}\label{lemma_fidelity_partial_measurement}
    Let $(\rho^\prime_{1, i}, \rho^\prime_{2, i})$ be the pair of states with different query oracles at step $i$ of $C^*$ before the application of $M^*_i$. Let $c\in \{1,2\}$ and $\rho_{c, i} = M^*_i \rho^\prime_{c, i} (M^*_i)^\dagger$. Then,
    \begin{align*}
        F(\rho^\prime_{1, i}, \rho^\prime_{2, i}) \leq F(\rho_{1,i}, \rho_{2, i})
    \end{align*}
\end{lemma}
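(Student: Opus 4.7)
Since $C^*$ applies a composition of linear maps to the pure initial state, both $\rho^\prime_{c,i}$ and $\rho_{c,i}$ are pure. Writing $\rho^\prime_{c,i} = \ket{\psi^\prime_{c,i}}\bra{\psi^\prime_{c,i}}$ and $\ket{\psi_{c,i}} = M^*_i\ket{\psi^\prime_{c,i}}$, the lemma reduces to establishing the normalized-overlap inequality
\[
\frac{\left|\braket{\psi_{1,i}}{\psi_{2,i}}\right|}{\|\psi_{1,i}\|\cdot\|\psi_{2,i}\|} \;\geq\; \left|\braket{\psi^\prime_{1,i}}{\psi^\prime_{2,i}}\right|.
\]
My plan is to prove this by exploiting the explicit algebraic form of $M^*_i$: it is a weighted sum of mutually orthogonal projectors, with weights $d^{(c)}_{i,n}=1/(a^{(c)}_{i,n})^{r_i-1}$ that were tuned in Proposition~\ref{proposition_equivalent_models} precisely to cancel the tensor amplification of Born probabilities arising from holding $r_i$ copies of the state active at step~$i$.

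The core step is to expand $M_i^{*(c)} = \sum_n d^{(c)}_{i,n}\Pi_n$ with $\Pi_n = (\ket{n}\bra{n})^{\otimes r_i}$, where $\Pi_n\Pi_m = \delta_{n,m}\Pi_n$ makes all cross terms vanish, yielding
\[
\braket{\psi_{1,i}}{\psi_{2,i}} = \sum_n d^{(1)}_{i,n}d^{(2)}_{i,n}\,\bra{\psi^\prime_{1,i}}\Pi_n\ket{\psi^\prime_{2,i}}, \qquad \|\psi_{c,i}\|^2 = \sum_n \bigl(d^{(c)}_{i,n}\bigr)^2\,\|\Pi_n\ket{\psi^\prime_{c,i}}\|^2.
\]
Appealing to the inductive copy-tensor structure that $C^*$ maintains at each step (verified by induction on $i$ from the action of $U^*_j$ and $M^*_j$ for $j<i$), the block norms $\|\Pi_n\ket{\psi^\prime_{c,i}}\|^2$ factor through the $r_i$-fold tensor so as to exactly cancel $d^{(c)}_{i,n}$ against the amplified probability $(a^{(c)}_{i,n})^{r_i}$, producing the clean per-block coefficients $(d^{(c)}_{i,n})^2 \|\Pi_n\ket{\psi^\prime_{c,i}}\|^2 = (a^{(c)}_{i,n})^{2-r_i}$ and $d^{(1)}_{i,n}d^{(2)}_{i,n}\sqrt{\|\Pi_n\ket{\psi^\prime_1}\|^2\|\Pi_n\ket{\psi^\prime_2}\|^2} = (a^{(1)}_{i,n}a^{(2)}_{i,n})^{1-r_i/2}$.

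The main obstacle is the resulting algebraic inequality between the two overlaps. Following the paper's hint, I plan to invoke the Bhattacharyya-coefficient characterization of fidelity~\cite{fidelity_bhattacharyya_coeffs} applied to the projective measurement $\{\Pi_n\}_n \cup \{I-\sum_n\Pi_n\}$: this yields an upper bound on $|\braket{\psi^\prime_{1,i}}{\psi^\prime_{2,i}}|$ expressed in exactly the per-block quantities $\|\Pi_n\ket{\psi^\prime_c}\|^2$ and the normalized block fidelities that also control the post-$M^*_i$ overlap. Combined with the cancellation identities above and a Cauchy-Schwarz step on $\sum_n(a^{(c)}_{i,n})^{2-r_i}$, this should convert the Bhattacharyya upper bound on $F_\mathrm{before}$ into the desired lower bound on $F_\mathrm{after}$. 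The delicate point is that the weights $d^{(c)}_{i,n}$ depend on $c$ through the oracle-dependent probabilities $a^{(c)}_{i,n}$, so one cannot invoke standard CPTP-monotonicity for a single channel acting on both states; the argument must instead leverage the specific form $d=1/a^{r_i-1}$ in a state-dependent way to undo the $r_i$-copy amplification built into $C^*$.
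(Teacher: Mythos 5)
Your setup is sound as far as it goes: the orthogonal-projector expansion of $M^*_i$, the observation that its weights $d_{i,n}=1/a_{i,n}^{r_i-1}$ are state-dependent (so one cannot simply cite monotonicity of fidelity under a single CPTP map), and the identification of the $r_i$-fold amplification that these weights are tuned to undo are all correct and relevant. But the proposal stops exactly where the lemma begins. The claim that the Bhattacharyya characterization ``combined with a Cauchy--Schwarz step \dots should convert'' the pre-measurement bound into the post-measurement one is the entire content of the statement, and it is asserted rather than carried out. It is not a routine verification: with your convention $\ket{\psi_{c,i}}=M^*_i\ket{\psi'_{c,i}}$ the norms are amplified, $\|\psi_{c,i}\|^2=\sum_n (a^{(c)}_{i,n})^{2-r_i}\geq 1$, so the denominator of your normalized overlap grows along with the numerator and the direction of the inequality is genuinely unclear; moreover the cross terms $d^{(1)}_{i,n}d^{(2)}_{i,n}\bra{\psi'_{1,i}}\Pi_n\ket{\psi'_{2,i}}$ carry phases, so no per-outcome comparison follows immediately from Cauchy--Schwarz.

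There is also a structural problem with the cancellation identities you rely on. The factorization $\|\Pi_n\ket{\psi'_{c,i}}\|^2=(a^{(c)}_{i,n})^{r_i}$ requires the $r_i$ active registers to form an exact pure tensor power, but for $i>1$ they are entangled with the already-finished registers through the outcomes of $M^*_1,\dots,M^*_{i-1}$: the global pure state is a superposition over outcome histories $m$ of tensor powers, so $\|\Pi_n\ket{\psi'_{c,i}}\|^2=\sum_m|\gamma_m|^2\,a_{n|m}^{\,r_i}$, which is not $(a^{(c)}_{i,n})^{r_i}$ in general. This is precisely why the paper does not argue on the global pure state: it restricts to the (mixed) state on the $r_i$ registers that $M^*_i$ acts on, writes down an explicit purification achieving $F(\rho'_{1,i},\rho'_{2,i})$ via Uhlmann's theorem, shows term by term that applying $M^*_i$ multiplies each term of the purification overlap by a factor $d_{i,j}\geq 1$, and then invokes Uhlmann a second time so that this particular post-measurement purification lower-bounds $F(\rho_{1,i},\rho_{2,i})$. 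That two-sided use of Uhlmann together with the elementary bound $d_{i,j}\geq 1$ is the mechanism your proposal is missing; the Bhattacharyya remark in the paper is motivation for why measurement should not decrease fidelity, not the engine of the proof.
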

\begin{proof}
    Let $c\in \{1,2\}$. The proof of the statement is done by explicitly defining the states $\rho^\prime_{c, i}$ as the sum of pure states whose purification results in the maximal fidelity, per Uhlmann's theorem. We show that for any such purification, the same purification process increases the fidelity when applied to $\rho_{c, i}$. This lower-bounds $F(\rho_{1,i}, \rho_{2, i})$ by Uhlmann's theorem, proving our statement.

    Let $k$ be an arbitrary integer. We define the state at each register the same due to the multiplicativity of fidelity.
    As the gate $M^*_i$ only affects \parallelregisters registers, we shall ignore the rest. Each of the \parallelregisters registers is the following mixed state,
    \begin{align*}
        \sigma_{c,i} &= \sum_{k\in [k]} b_{c,k} \ket{\lambda_{c,k}}\bra{\lambda_{c,k}}\\
        \ket{\lambda_{c,k}} &= \sum_j \alpha_{c,j, k} \ket{j}_{\mathcal{M}}\ket{\chi_{c,j,k}}
    \end{align*}
    where $b_{c,k}\in \mathbb{R}$, $\alpha_{c,j,k}\in \mathbb{C}$, $\mathcal{M}$ is the register being measured by $M_i^*$ and $\chi_{c,j,k}$ is its associated workspace.
    The definition of $\ket{\lambda_{c,k}}$ in $\sigma_{c,i}$ is such that if we purify it as we do below, we obtain the maximum fidelity as shown by Uhlmann's theorem.
    Let $A = \otimes_j^{\parallelregisters} [k]_j$, $a\in A$ and $b^*_{c,a} = \prod_{r\in a} b_{c,r}$.
    Therefore the entire state $\rho^\prime_{c, i}$ and its purification may be described as,
    \begin{align*}
        \rho^\prime_{c, i} &= \sigma_{c,i}^{\otimes \parallelregisters} = \sum_{a\in A} b^*_{c,a} (\otimes_{s\in a} \ket{\lambda_{c,s}}\bra{\lambda_{c,s}}) \\
        \ket{\psi^\prime_{c,i}} &= \sum_{a\in A} \sqrt{b^*_{c,a}} (\otimes_{s\in a} \ket{\lambda_{c,s}})\ket{s}
    \end{align*}
    Let $a^*_{c,j,a} = \prod_{s\in A} \norm{\alpha_{c, j, s}}_2^2$, $d$ be the probability of measuring $j$ in $\sigma_{c,i}$, $d_{i,j}=\frac{1}{d^{\parallelregisters - 1}}$ and $\alpha^*_{c,j,a} = \prod_{s\in A} \alpha_{c,j,s}$.
    We apply $M^*_i$ to $\rho^\prime_{c, i}$, which adjusts the amplitudes and ensures that the register $\mathcal{M}$ measures to the same value $j$ over all $r_i$ registers. By using the definition of $M^*_i$ from Line~\ref{eq_our_definition_measurement},
    \begin{align*}
        \rho_{c,i} &= \sum_{a\in A, j} a^*_{c,j,a} d_{i,j} b^*_{c,a} (\otimes_{s\in a} \ket{\chi_{c,j,k}}\bra{\chi_{c,j,k}}) \\
        \ket{\psi_{c,i}} &= \sum_{a\in A, j} \alpha^*_{c,j,a} \sqrt{d_{i,j}b^*_{c,a}} (\otimes_{s\in a} \ket{\chi_{c,j,k}})\ket{s}
    \end{align*}
    By using our assumption about $\rho^\prime_{c, i}$, we find,
    \begin{align}
        F(\rho^\prime_{1, i}, \rho^\prime_{2, i}) &= \bra{\psi^\prime_{1,i}}\ket{\psi^\prime_{2,i}} \label{eq_fidelity_1}\\
        &= \sum_{\substack{a_1\in A_1 \\ a_2\in A_2}} \sqrt{b^*_{1,a_1}b^*_{2,a_2}} \prod_{\substack{s_1\in a_1 \\ s_2\in a_2}} \bra{\lambda_{1,s_1}}\ket{\lambda_{2,s_2}} \delta_{s_1,s_2} \\
        &= \sum_{a\in A, j} \sqrt{b^*_{1,a}b^*_{2,a}} \prod_{s\in a} \bra{\chi_{1, j, s}}\ket{\chi_{2,j,s}}\alpha_{1,j,s}\alpha_{2,j,s} \label{eq_fidelity_3}\\
        &\leq \sum_{a\in A, j} \sqrt{b^*_{1,a}b^*_{2,a}} \alpha^*_{1,j,a}\alpha^*_{2,j,a} d_{i,j} \prod_{s\in a} \bra{\chi_{1, j, s}}\ket{\chi_{2,j,s}} \label{eq_fidelity_4}\\
        &= \bra{\psi_{1,i}}\ket{\psi_{2,i}} \label{eq_fidelity_5}\\
        & \leq \bra{\psi^{max}_{1,i}}\ket{\psi^{max}_{2,i}} \label{eq_fidelity_6}\\
        &= F(\rho_{1, i}, \rho_{2, i})
    \end{align}
    Lines~\ref{eq_fidelity_1}-~\ref{eq_fidelity_3} are by our definitions above, Line~\ref{eq_fidelity_4} follows as $d_{i,j}\geq 1$, and Line~\ref{eq_fidelity_6} follows by Uhlmann's theorem where we define $(\psi^{max}_{1,i},\psi^{max}_{2,i})$ as the pair of purifications which maximize $F(\rho_{1,i}, \rho_{2, i})$.
\end{proof}

\section{Lower-bounds on non-collapsing measurements}

In this section we expand the bounds established in~\cite{space_above_bqp}. First, we generalize their lower-bound to a wider selection of problems. Next, we show that the $\Tilde{O}(N^{1/3})$ search algorithm from~\cite{space_above_bqp} is optimal up to logarithmic factors by obtaining a $\Omega(N^{1/3})$ lower-bound.

\subsection{Weighted adversary method}

Our proof closely follows the original proof of the positive-weighted adversary method~\cite{weighted_adversary}. There are two main adjustments. We employ Lemma~\ref{lemma_fidelity_partial_measurement} to argue that using partial measurements won't affect the progress measure. Additionally, we consider the effect of non-collapsing measurements by using the model in Definition~\ref{definition_pdqp_our}. In order to accommodate the second change, we will require the following Proposition.

\begin{proposition}\label{proposition_polynomial_bound}
        For any $k\in \mathbb{N}$ and $r,s\in \mathbb{R}$ where $ 0\leq r \leq 1 $ and $0\leq s \leq 2r$,
        \begin{align*}
            ks \geq r^k - (r-s)^k
        \end{align*}
\end{proposition}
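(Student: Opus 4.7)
The plan is to reduce the inequality to a bound on a simple integral. By the fundamental theorem of calculus, one may write
\[
r^k - (r-s)^k \;=\; \int_{r-s}^{r} k\, t^{k-1} \, dt,
\]
and this identity is valid regardless of whether $r - s$ is non-negative (when $s \leq r$) or negative (when $r < s \leq 2r$). The task then reduces to bounding the integrand $k t^{k-1}$ pointwise above by $k$ on the interval of integration.

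The key observation is that the hypothesis $0 \leq s \leq 2r$ gives $r - s \geq -r$, so every $t \in [r-s,\, r]$ satisfies $|t| \leq r$. Combined with $r \leq 1$, this yields $|t| \leq 1$, hence $|t^{k-1}| \leq 1$, and in particular $t^{k-1} \leq 1$. Substituting this bound into the integral produces
\[
r^k - (r-s)^k \;\leq\; \int_{r-s}^{r} k \, dt \;=\; k\bigl(r - (r-s)\bigr) \;=\; ks,
\]
which is the desired inequality. An essentially equivalent argument avoiding integration defines $g(s) = ks - r^k + (r-s)^k$, checks $g(0) = 0$, and notes that $g'(s) = k\bigl(1 - (r-s)^{k-1}\bigr) \geq 0$ on $[0, 2r]$ by the same bound $|r-s| \leq 1$; then monotonicity from $g(0)=0$ finishes.

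The only subtle point is that $r - s$ can be negative when $s > r$, in which case $(r-s)^{k-1}$ may be positive or negative depending on the parity of $k-1$. However, since we only need the one-sided bound $(r-s)^{k-1} \leq 1$ rather than an absolute-value bound, the estimate $|r-s| \leq r \leq 1$ suffices uniformly over the stated range of $s$. I do not anticipate any real obstacle here; this is a calculus inequality whose only delicate ingredient is carefully tracking the sign of $r-s$, and the hypothesis $s \leq 2r$ is precisely what keeps $|r-s|$ within the unit interval.
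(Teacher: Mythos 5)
Your proof is correct, but it takes a genuinely different route from the paper. You prove the inequality by writing $r^k - (r-s)^k = \int_{r-s}^{r} k t^{k-1}\,dt$ and bounding the integrand pointwise by $k$, using the single observation that $s \leq 2r$ forces $|t| \leq r \leq 1$ (equivalently, your function $g(s) = ks - r^k + (r-s)^k$ is nondecreasing in $s$ with $g(0)=0$). The paper instead fixes $s$ and inducts on $k$: it sets $A_{k,s} = ks - r^k + (r-s)^k$, verifies the base case $k=1$, and in the case $s \leq r$ shows $A_{k+1,s} - A_{k,s} \geq 0$ via the algebraic identity $A_{k+1,s}-A_{k,s} = s(1-r^k) + \bigl((r-s)^k - r^k\bigr)(r-1-s)$; the case $s > r$ is handled separately with a parity argument bounding $(r-s)^k$ below by $(-r)^k$. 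Your argument buys uniformity: the one bound $(r-s)^{k-1} \leq 1$ covers both sign cases of $r-s$ at once, so you avoid the case split on $s \lessgtr r$ and the ``assume $k$ is odd'' step entirely, and the quantity being differentiated is exactly the paper's $A_{k,s}$ viewed as a function of $s$ rather than $k$. The paper's version is purely algebraic (no calculus), which is its only advantage. Either proof is acceptable; yours is arguably cleaner.
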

\begin{proof}
    Let $A_{k,s} = ks - r^k + (r-s)^k$. Note that the statement holds when $k=1$. Therefore let $k\geq 2$.

    Assume $s\leq r$. We have,
    \begin{align*}
        A_{n+1,s} - A_{n,s} &= s - r^n(r-1) + (r-s)^n(r-s-1)\\
        &= s -r^ns +r^ns -r^n(r-1) + (r-s)^n(r-s-1)\\ 
        &= s(1-r^n)+((r-s)^n -r^n)(r-1-s)
    \end{align*}
    By the bounds on $r,s$ and the assumption that $s\leq r$, we have that $s(1-r^n)\geq 0$, $(r-1-s)\leq 0$ and $((r-s)^n -r^n)\leq 0$. As $A_{n+1,s} - A_{n,s} \geq 0$, the statement holds. 

    Assume $s>r$. Additionally, without loss of generality, assume $k$ is odd. We find,
    \begin{align*}
        A_{k,s} &= ks-(r^k-(r-s)^k)\\
            &\geq ks - (r^k-(r-2r)^k)\\
            &= ks - 2r^k
    \end{align*}
    As $k\geq 2$ and $s>r$, $A_{k,s}\geq 0$.
\end{proof}

We define the progress measure as,

\begin{align*}
    \Phi(t) = \sum_{(x,y)\in R} w(x,y) F(\rho_{x,t}, \rho_{y,t})
\end{align*}

We bound the progress made at each step using the following lemma.

\begin{lemma}\label{lemma_progress_measure_partial_measurement}
    Let $f$ be a boolean function with a valid weight scheme and its associated maximum load $v_{max}$. Then the progress made by a \pdqpname algorithm each query $t\in [\queryamount]$ is bounded by,
    \begin{align*}
        \Phi(t-1)-\Phi(t)\leq \abilityamount\, v_{max} \Phi (0)
    \end{align*}
\end{lemma}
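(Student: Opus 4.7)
The plan is to extend the positive weighted adversary argument of~\cite{weighted_adversary} to the oracle $\pdqporacle^*$ of Section~\ref{section_pdqp_model}, exploiting the parallel structure of $C^*$: at a query step $t$, the oracle acts as $Q_{t,x}^{\otimes r_t}$ on $r_t \leq \abilityamount$ copies of the ``logical'' computation. I proceed in three steps.

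First, I reduce to analyzing the query gate alone. Between queries, $C^*$ applies unitaries $U^*_j$ (which preserve fidelity by unitary invariance) and partial measurement gates $M^*_j$ (which by Lemma~\ref{lemma_fidelity_partial_measurement} can only increase $F(\rho_{x,\cdot}, \rho_{y,\cdot})$). Therefore $\Phi$ only decreases at the parallel query $Q_{t,x}^*$, and $\Phi(t-1) - \Phi(t)$ is bounded above by the drop that this gate alone induces on $\Phi$.

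Second, I bound the drop due to $Q_{t,x}^*$. By Uhlmann's theorem, choose purifications $\ket{\phi_{x,t-1}}, \ket{\phi_{y,t-1}}$ of $\rho_{x,t-1}, \rho_{y,t-1}$ realizing the fidelity. Because $Q_{t,x}^{\otimes r_t}\ket{\phi_{x,t-1}}$ purifies $\rho_{x,t}$, the post-query inner product lower-bounds $F(\rho_{x,t}, \rho_{y,t})$. Expanding this inner product, the fidelity drop is controlled by the amplitudes on basis indices $i$ where $x(i) \neq y(i)$ aggregated over the $r_t$ query registers. This is where Proposition~\ref{proposition_polynomial_bound} enters: the $r_t$-fold product-like drop in fidelity is bounded by $r_t$ times a single-register drop, converting a parallel analysis into a sum of $r_t$ single-register analyses of exactly the form handled by the original proof.

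Third, I aggregate. For each of the $r_t$ registers, applying Proposition~\ref{proposition_weight_identity} to the corresponding single-register amplitudes yields at most $v_{max}\Phi(0)$; summing over $r_t \leq \abilityamount$ registers gives $\abilityamount\, v_{max}\Phi(0)$, which is the claimed bound. The main technical obstacle I anticipate is that the $M^*_j$ gates correlate the $r_t$ registers, so the pre-query state is not a clean tensor product $\ket{\phi}^{\otimes r_t}$; nonetheless, each $M^*_j$ preserves a sum-of-tensor-products structure of the form $\sum_n \alpha_n \ket{\chi_n}^{\otimes r_t}$, which should be enough for the per-register amplitude extraction above to go through after purification, and ensuring the recovered amplitudes satisfy the hypotheses of Proposition~\ref{proposition_weight_identity} for every register is where most of the bookkeeping will lie.
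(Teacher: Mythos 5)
Your proposal follows essentially the same route as the paper's proof: discard unitaries by unitary invariance and partial measurements via Lemma~\ref{lemma_fidelity_partial_measurement}, bound the drop from the parallel query by linearizing the $D$-fold product of inner products with Proposition~\ref{proposition_polynomial_bound}, and finish with Proposition~\ref{proposition_weight_identity} and the bound $D \leq \abilityamount$ on the number of affected registers. The correlation issue you flag at the end is real but is exactly what the reduction to the measurement-free (hence tensor-product) case is meant to dispose of, which is also how the paper handles it.
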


\begin{proof}
    By the unitary invariance of fidelity, we will ignore the application of unitaries. Similarly, by Lemma~\ref{lemma_fidelity_partial_measurement}, we won't consider partial measurements as their effect may not decrease the progress measure. Therefore, without loss of generality, we assume our states are always pure. Additionally, we assume that during query $t$, we have $D$ non-collapsing measurements left and have already performed $d=P-D$.

    Let $\ket{\psi_{x,t}}$ denote the state of the algorithm on input $x$ at query $t$ and $\ket{\psi_{x,t}^\prime}$ denote the same before the application of the oracle. Additionally, let $\ket{\psi_{x,t}^*}$ be the state during query $i$ of $C^*$ as defined in Definition~\ref{definition_pdqp_our}, while $\ket{\psi_{x,t}^{\prime *}}$ is the state in $C^*$ before the query. Using these definitions, we obtain the following identity,

    \begin{align}
        \Phi(t-1) - \Phi(t) &= \sum_{(x,y)\in R} w(x,y) F( \psi_{x, t-1}^*,\psi_{y, t-1}^* ) - \sum_{(x,y)\in R} w(x,y) F(\psi_{x, t}^*,\psi_{y, t}^* )\\
        & \leq \sum_{(x,y)\in R} w(x,y) \abs{ \bra{\psi_{x,t-1}^*}\ket{\psi_{y,t-1}^*}  -  \bra{\psi_{x,t}^*}\ket{\psi_{y,t}^*}}\\ 
        &= \sum_{(x,y)\in R} w(x,y) \abs{\bra{\psi_{x,t}^{\prime *}}\ket{\psi_{y,t}^{\prime *}}  - \bra{\psi_{x,t}^*}\ket{\psi_{y,t}^*}} \label{eq_progress_simplification}
    \end{align}
    
    Let $\ket{\psi_{x,t}} = \sum_{i} \alpha_{x,i} \ket{i}_\mathcal{Q}\ket{\phi_{x,i}}_\mathcal{W}$ where $\mathcal{Q}$ is the query register and $\mathcal{W}$ is the workspace register associated with that query register. As we have $D$ non-collapsing measurements left to perform, an oracle call will affect the $D$ copies of the state labeled by $\mathcal{O}$ and neglect the remaining $d$.
    \begin{align*}
        \ket{\psi_{x,t}^*} = \ket{\psi_{x,1}}\otimes..\otimes\ket{\psi_{x,d}}\otimes (\ket{\psi_{x,t}}_\mathcal{O})^{\otimes D}
    \end{align*}

    For the sake of simplicity, let,
    \begin{align*}
        \hat{S}_i &= \sum_{i} \alpha_{x,i}\alpha_{y,i}\bra{\phi_{x,i}}\ket{\phi_{y,i}}\\
        \hat{S}_{x(i)=y(i)} &= \sum_{i:x(i)=y(i) } \alpha_{x,i}\alpha_{y,i}\bra{\phi_{x,i}}\ket{\phi_{y,i}}\\
        \hat{k} &= \sum_{i: x(i)\neq y(i)}\alpha_{x,i}\alpha_{y,i}\bra{\phi_{x,i}}\ket{\phi_{y,i}}
    \end{align*}
    We find that,
    \begin{align}
        \Phi(t-1) - \Phi(t) \leq 
        &\sum_{(x,y)\in R} w(x,y) \left| \bra{\psi_{x,t}^{\prime *}}\ket{\psi_{y,t}^{\prime *}} - \bra{\psi_{x,t}^*}\ket{\psi_{y,t}^*} \right|\label{eq_weight_1}\\
        = &\sum_{(x,y)\in R} w(x,y)\abs{\bra{\phi_{x,1}}\ket{\phi_{y,1}}...\bra{\phi_{x,d}}\ket{\phi_{y,d}}} \notag\\
        &\qquad\abs{ \hat{S}_{i_1}..\hat{S}_{i_D} - \left(\hat{S}_{x(i_1)=y(i_1)} - \hat{k}_1\right)...\left(\hat{S}_{x(i_D) = y(i_D)} - \hat{k}_D\right)} \label{eq_weight_2}\\
        \leq &\sum_{(x,y)\in R} w(x,y)\abs{\hat{S}_i ^D - \left(\hat{S}_i - 2\hat{k}\right)^D } \label{eq_weight_3}\\
        \leq & \sum_{(x,y)\in R} 2 w(x,y) D \hat{k} \label{eq_weight_4}\\
        = & D \sum_{\substack{(x,y)\in R \\ i: x(i)\neq y(i)}} 2 w(x,y) \left|\alpha_{x,i} \right| \left|\alpha_{y,i}\right|\label{eq_weight_5}\\
        \leq & P\; v_{max}\Phi(0)\label{eq_weight_6}
    \end{align}
    Line~\ref{eq_weight_1} was shown on Line~\ref{eq_progress_simplification}, Line~\ref{eq_weight_2} uses the multiplicativity of fidelity, Line~\ref{eq_weight_3} holds as the inner product of two states is bounded by 1, Line~\ref{eq_weight_4} uses Proposition~\ref{proposition_polynomial_bound} by setting $r=\hat{S}_i$ and $s=2\hat{k}$, and Line~\ref{eq_weight_6} is by Proposition~\ref{proposition_weight_identity} and $D\leq \abilityamount$.
    \end{proof}

    Lemma~\ref{lemma_progress_measure_partial_measurement} is sufficient to obtain the bound below.

    \begin{lemma}\label{lemma_weighted_adversary_partial_measurement}
        Let $f$ be a boolean function with a weight scheme and its associated maximum load $v_{max}$. In a \pdqpname setting, the following holds,
        \begin{align*}
            QP = \Omega\left(\frac{1}{v_{max}}\right)
        \end{align*}
    \end{lemma}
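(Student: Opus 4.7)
The plan is to follow the standard adversary-method playbook: bound the initial value of the progress measure $\Phi$, bound the final value, and use the per-query drop from Lemma~\ref{lemma_progress_measure_partial_measurement} to lower-bound $Q$. Since Lemma~\ref{lemma_progress_measure_partial_measurement} has already absorbed the cost of non-collapsing measurements into the factor $P$ on the right-hand side, the remaining work is essentially bookkeeping plus a standard ``bounded error implies bounded fidelity'' argument, this time applied to the $P$-fold tensor-product end state produced by $C^*$.

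First, I would observe that at $t=0$ all executions begin from the identical pure state $\ket{0}^{\otimes lP}$ regardless of the input, so $F(\rho_{x,0},\rho_{y,0})=1$ and therefore $\Phi(0)=\sum_{(x,y)\in R} w(x,y)$. Next, for any pair $(x,y)\in R$ we have $f(x)\neq f(y)$, so any bounded-error algorithm must distinguish the classical outputs obtained from measuring $\rho_{x,Q}$ and $\rho_{y,Q}$ with total-variation distance at least some constant $\delta>0$ (coming from error $\leq 1/3$). By the Fuchs--van de Graaf inequality applied to the post-processing measurement, this forces $F(\rho_{x,Q},\rho_{y,Q})\leq c$ for some constant $c<1$ uniformly in $(x,y)\in R$. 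Summing against the weights gives $\Phi(Q)\leq c\,\Phi(0)$.

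Now I would telescope the per-step bound of Lemma~\ref{lemma_progress_measure_partial_measurement} across the $Q$ query steps (using unitary invariance and Lemma~\ref{lemma_fidelity_partial_measurement} to argue that non-query steps do not decrease $\Phi$, so they contribute nothing to the drop):
\begin{align*}
(1-c)\,\Phi(0)\;\leq\;\Phi(0)-\Phi(Q)\;=\;\sum_{t=1}^{Q}\bigl(\Phi(t-1)-\Phi(t)\bigr)\;\leq\;Q\cdot P\cdot v_{max}\cdot\Phi(0).
\end{align*}
Dividing by $\Phi(0)>0$ yields $QP\geq (1-c)/v_{max}=\Omega(1/v_{max})$, which is the claimed bound.

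The main obstacle I anticipate is the ``final fidelity is bounded'' step, because the states $\rho_{x,Q}$ live on $P$ registers and the output in the original model is obtained by measuring each copy separately rather than by a joint measurement. One has to check that it suffices to bound fidelity of the joint $P$-register states: the distinguishing classical procedure (reading $\{v_i\}$ and deciding $f$) is a particular POVM on the $P$-fold state of $C^*$, and success probability $\geq 2/3$ for that POVM implies trace distance $\geq 1/3$, hence fidelity $\leq c<1$, by monotonicity of trace distance under measurement and Fuchs--van de Graaf. Once this is nailed down, the remainder of the proof is a one-line telescoping argument.
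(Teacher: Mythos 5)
Your proposal is correct and follows essentially the same route as the paper: the paper's proof likewise bounds $\Phi(T)\leq 2\sqrt{\epsilon(1-\epsilon)}\,\Phi(0)$ from the bounded-error condition and divides $\Phi(0)-\Phi(T)$ by the per-query drop $P\,v_{max}\,\Phi(0)$ from Lemma~\ref{lemma_progress_measure_partial_measurement}. Your extra care about the final fidelity bound applying to the $P$-register state of $C^*$ (via monotonicity under the measurement POVM plus Fuchs--van de Graaf) is a detail the paper leaves implicit, but it is handled correctly.
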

    \begin{proof}
        Assuming success probability of $1-\epsilon$, we have that $\Phi(T) \leq 2\sqrt{\epsilon (1-\epsilon)} \Phi(0)$. By Lemma~\ref{lemma_progress_measure_partial_measurement},
        \begin{align*}
            Q\geq \frac{\Phi(0)-\Phi(T)}{\Phi(t-1)-\Phi(t)} \geq \frac{1-2\sqrt{\epsilon(1-\epsilon)}}{P v_{max}}
        \end{align*}
\end{proof}

\subsection{Lower-bound on search}

We use the proof of a $\Omega(N^{1/3})$ lower-bound on \pdqpname without collapsing measurements from~\cite{space_above_bqp} and apply Lemma~\ref{lemma_fidelity_partial_measurement} in order to introduce partial measurement.

\begin{lemma}\label{lemma_lowerbound_search}
    Any algorithm solving the search problem in \pdqpname setting using \queryamount queries and and \abilityamount non-collapsing measurements satisfies,
    \begin{align*}
        \queryamount+\abilityamount = \Omega(N^{1/3})
    \end{align*}
\end{lemma}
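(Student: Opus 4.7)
The plan is to extend the ABFL Appendix E argument that establishes $\Omega(N^{1/3})$ for search in \pdqpname without partial collapsing measurements, to the full \pdqpname setting, using Lemma~\ref{lemma_fidelity_partial_measurement} to absorb the effect of partial measurements. The strategy is to show $PQ^2 = \Omega(N)$, from which $Q + P = \Omega(N^{1/3})$ follows by minimizing $Q + P$ subject to that constraint.

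I work in the $C^*$ circuit model of Definition~\ref{definition_pdqp_our}. Let $\rho^0_t$ denote the state of $C^*$ at step $t$ under the null (no marked element) input, and $\rho^x_t$ the state under input marked at position $x \in [N]$. For the algorithm to distinguish the null input from each marked-at-$x$ input with constant advantage we need $F(\rho^0_T, \rho^x_T) \leq 1 - c$ for every $x$, for some absolute constant $c > 0$. In the no-$M^*$ setting, the final state factorizes as $\rho^x_T = \bigotimes_i \psi_i^x$, where $\psi_i^x$ is the state of the original \pdqpname circuit immediately before the $i$-th non-collapsing measurement. Multiplicativity of fidelity gives $F(\rho^0_T, \rho^x_T) = \prod_i F(\psi^0_i, \psi^x_i)$, and Weierstrass's inequality $\prod_i (1 - \epsilon_i) \geq 1 - \sum_i \epsilon_i$ yields $\sum_i \bigl(1 - F(\psi^0_i, \psi^x_i)\bigr) \geq c$ for every $x$. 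Summing over $x$, applying $1 - F(\psi, \phi) \leq \tfrac{1}{2}\|\psi - \phi\|^2$, and invoking the BBBV hybrid bound from Lemma~\ref{lemma_hybrid_argument} ($\sum_x \|\psi^0_i - \psi^x_i\|^2 \leq 4 q_i^2 \leq 4 Q^2$, where $q_i \leq Q$ is the number of queries preceding step $i$) produces $cN \leq 2 \sum_i Q^2 = 2PQ^2$, i.e.\ $PQ^2 = \Omega(N)$.

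To extend the bound to \pdqpname with partial measurements I track the progress measure $\Phi(t) = \sum_{x \in [N]} F(\rho^0_t, \rho^x_t)$ through the full $C^*$ circuit. Unitary steps leave $\Phi$ invariant by unitary invariance of $F$, and by Lemma~\ref{lemma_fidelity_partial_measurement} the adjusted partial measurement gates $M^*_i$ are non-decreasing in $F$, hence non-decreasing in $\Phi$. Thus the entire gap $\Phi(0) - \Phi(T) \geq cN$ must be accumulated at the $Q$ query steps, and the per-query hybrid-style contribution to the progress measure is governed by the same BBBV bound as in the no-$M^*$ case, because the oracle operator $Q^*_i$ acts identically on the relevant parallel registers regardless of whether any $M^*$ has been interleaved. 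The cumulative query decrement therefore satisfies exactly the $2 P Q^2 \geq cN$ bound derived above.

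The main obstacle is reconciling the tensor-product multiplicativity step with the entangling action of $M^*$ in the full setting: without $M^*$ the final state of $C^*$ factorizes and the product-of-fidelities identity applies, but once $M^*$ gates are interleaved the registers of $C^*$ become entangled and $F(\rho^0_T, \rho^x_T) \neq \prod_i F(\psi^0_i, \psi^x_i)$ in general. Lemma~\ref{lemma_fidelity_partial_measurement} circumvents this obstruction at the level of the progress measure: instead of attempting to factorize the final fidelity, we establish its stepwise monotonicity along $C^*$ and conclude that the overall decrease of $\Phi$ is driven entirely by the queries, so the no-$M^*$ hybrid analysis controls the full-setting bound unchanged.
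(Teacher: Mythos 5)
Your overall strategy---the BBBV hybrid argument plus multiplicativity of fidelity over the \abilityamount sampled registers, with Lemma~\ref{lemma_fidelity_partial_measurement} absorbing the partial measurements---is the same as the paper's, and your measurement-free derivation of $\abilityamount\queryamount^{2}=\Omega(N)$ (factorize the final state, apply Weierstrass, $1-F\leq\tfrac12\norm{\psi-\phi}^{2}$, and Lemma~\ref{lemma_hybrid_argument}) is correct, and arguably cleaner than the paper's averaging-plus-$e^{-\norm{\cdot}^{2}}$ route. The gap is in your final step. You replace the endpoint comparison by a stepwise progress measure $\Phi(t)=\sum_{x}F(\rho^{0}_{t},\rho^{x}_{t})$ and assert that the per-query decrement is ``governed by the same BBBV bound.'' It is not. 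A per-query bound on $\Phi(t-1)-\Phi(t)$ is exactly the adversary-method accounting of Lemma~\ref{lemma_progress_measure_partial_measurement}: at a single query the oracles for inputs $0$ and $x$ differ only on index $x$, and the best uniform bound is $\sum_{x}O(\abilityamount)\abs{\alpha_{x}}\abs{\alpha^{x}_{x}}\leq O(\abilityamount)\sqrt{\sum_{x}\abs{\alpha_{x}}^{2}}\sqrt{\sum_{x}\abs{\alpha^{x}_{x}}^{2}}\leq O(\abilityamount\sqrt{N})$, since the $\rho^{x}_{t}$ are \emph{different} states for different $x$ and so $\sum_{x}\abs{\alpha^{x}_{x}}^{2}$ is only bounded by $N$. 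Over \queryamount queries this yields $\queryamount\abilityamount=\Omega(\sqrt{N})$, i.e.\ only the $\Omega(N^{1/4})$ bound that already follows from Lemma~\ref{lemma_weighted_adversary_partial_measurement}. The extra factor of \queryamount in $\abilityamount\queryamount^{2}=\Omega(N)$ comes from the hybrid argument's per-$x$ accumulation $\norm{\psi_{t}-\psi^{x}_{t}}\leq\sum_{t'\leq t}2\abs{\alpha_{x,t'}}$ followed by Cauchy--Schwarz over the time index; that $\ell_{2}$ bookkeeping is destroyed once you collapse everything into a scalar progress measure at each step, so the claim that ``the cumulative query decrement satisfies exactly the $2\abilityamount\queryamount^{2}\geq cN$ bound'' does not follow, and the route as written cannot reach $N^{1/3}$.

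The paper avoids this by never telescoping fidelity through the queries: it runs the hybrid argument entirely on the measurement-free pure states $\psi_{t},\psi^{x}_{t}$ (where the per-$x$ accumulation is available), uses an averaging argument to fix a single hard $x$ with $\sum_{t}\norm{\psi_{t}-\psi^{x}_{t}}^{2}\leq 4\abilityamount\queryamount^{2}/N$, and only then invokes Lemma~\ref{lemma_fidelity_partial_measurement} once, at the endpoint, to lower-bound each factor $F(\rho_{t},\rho^{x}_{t})$ of the final tensor-product fidelity by $\abs{\bra{\psi_{t}}\ket{\psi^{x}_{t}}}$. You should restructure your extension the same way: keep the BBBV analysis on the measurement-free evolution and use Lemma~\ref{lemma_fidelity_partial_measurement} only to transfer the resulting bound to the measured states at the very end.
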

\begin{proof}

Consider a \pdqpname algorithm where $Q=P = o(N^{1/3})$. We compare the cases when there are no marked items versus when $x$ is a marked item. Let their respective quantum states at step $t$ be $\rho_t, \rho_t^x$ and $\psi_t$ and $\psi_t^x$ be states of the same algorithm without partial measurement. By Lemma~\ref{lemma_hybrid_argument}, we have,

\begin{align*}
    \sum_x \norm{\psi_t-\psi_t^x}^2_2 \leq 4Q^2
\end{align*}

By summing over $P$ non-collapsing measurements and using an averaging argument, there exists an $x$ such that,

\begin{align*}
\sum_t \norm{\psi_t - \psi_t^x}^2_2 \leq \frac{4PQ^2}{N}
\end{align*}

By our assumption on the number of queries, this norm must be constant. Let's say that $\norm{\psi_t-\psi_t^x}_2^2 \leq 0.01$. Let $\Phi=\otimes_t \rho_t$ represent all the states before measurement and $V$ the distribution of their measurements. In order to notice the presence of a marked item, we require $|V-V_x|_1\geq \Omega(1)$. Therefore,

\begin{align}
    \Omega(1)\leq |V-V_x|_1 &\leq \norm{\Phi - \Phi_x} \label{eq_search_1}\\
    &\leq \sqrt{1-F(\Phi, \Phi_x)^2}\label{eq_search_2}\\
    & =\sqrt{1-|\Pi_t F(\rho_t, \rho_t^x)|^2}\label{eq_search_3}\\
    &\leq \sqrt{1-|\Pi_t \bra{\psi_t}\ket{ \psi_t^x}|^2}\label{eq_search_4}\\
    &\leq \sqrt{1-\Pi_t e^{-\norm{\psi_t-\psi_t^x}^2_2}}\label{eq_search_5}\\
    &\leq \sqrt{1 - e^{-\frac{4PQ^2}{N}}}\label{eq_search_6}\\
    &=o(1)
\end{align}

Line~\ref{eq_search_2} follows by the definition of trace norm, Line~\ref{eq_search_4} holds as by Lemma~\ref{lemma_fidelity_partial_measurement}, we may assume we did not perform partial measurements, and Line~\ref{eq_search_5} uses the inequality 
\begin{align*}
    |\bra{\psi_t}\ket{\psi_t^x}|\geq \text{Re}(\bra{\psi_t}\ket{\psi_t^x})\geq e^{-\norm{\psi_t-\psi_t^x}^2_2}
\end{align*}
which holds as $1-x\geq e^{-2x}$ when $0\leq x\leq 0.01$.
\end{proof}

\section{Non-collapsing measurements and non-adaptive queries}

\subsection{Non-adaptive query model}\label{subsection_nonadaptive_model}

In this section, we describe the non-adaptive query model and prove its lower-bounds. Informally, the model performs all of its queries at the start of the computation, meaning that they are independent of the input $x$. We obtain a trade-off between \queryamount and \abilityamount.

The notation and proof closely follows that of~\cite{nonadaptive_weighted_adversary}. Assume we perform \queryamount non-adaptive queries. Given an oracle operator $O_x$ for an input $x$, we define the query as,
\begin{align*}
    O_x^{\queryamount} \ket{i_1, b_1, ... i_Q, b_Q} \rightarrow (-1)^{s} \ket{i_1, b_1, ... i_Q, b_Q}
\end{align*}

where $s = \sum_{j\in [Q]} x(i_j)\cdot b_j$. The algorithm may be described as follows,
\begin{align*}
    \measurementgate{\abilityamount}\unitarygate{\abilityamount}... \measurementgate{1}\unitarygate{1}O_x^{\queryamount}\unitarygate{0}\ket{0}
\end{align*}

Assume we perform a non-collapsing measurement after every unitary except $\unitarygate{0}$. The label \nonadaptivelabel describes a variable in the non-adaptive query setting. For example, $\abilityamount^{\nonadaptivelabel}$ describes the number of non-collapsing measurements performed. Next, we define a variation of the weight scheme from Definition~\ref{definition_weight_scheme} for non-adaptive queries. The main idea of the proof is lower-bounding the minimum size of the initial parallel query by arguing that a smaller query would require at least two rounds of parallel queries. We use the superscript $k$ to denote a variable for the parallel query of size $k\in \mathbb{N}$.

\begin{definition}[Weight scheme of~\cite{nonadaptive_weighted_adversary}]\label{definition_nonadaptive}
    Consider an arbitrary boolean function $f$ and an input $x$. We define ${}^k x = x^k$, ${}^k f({}^k x) = f(x)$ and an index-tuple $I = (i_1, ... i_k)$. Therefore $x(I) = (x(i_1),...,x(i_k))$. 
    
    Let $w, wt$ and $v$ be the weight variables as in Definition~\ref{definition_weight_scheme} for $f$. Then $W, WT$ and $V$ are their variations for ${}^k f$. We let $W({}^k x, {}^k y) = w(x,y)$.
\end{definition}

We will use the following lemma~from~\cite{nonadaptive_weighted_adversary},

\begin{lemma}[Lemma~2 in~\cite{nonadaptive_weighted_adversary}]\label{lemma_nonadaptive_average_weights}
    Let $x$ be an input and $w$ a weight function. For any $k\in \mathbb{N}$ and an index-tuple $I=(i_1,..,i_k)$,
    \begin{align*}
        \frac{WT({}^kx)}{V({}^kx, I)} \geq \frac{1}{k} \min_{j\in [k]}\frac{wt(x)}{v(x,i_j)}
    \end{align*}
\end{lemma}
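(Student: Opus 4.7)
The plan is to reduce the claim to an elementary averaging bound after unpacking how the weight scheme lifts to ${}^k f$. Since ${}^k x = x^k$ is simply $x$ repeated $k$ times, the natural lifted relation $R'$ consists of pairs $(x^k, y^k)$ with $(x,y)\in R$. Combined with the convention $W({}^k x, {}^k y) = w(x,y)$ from Definition~\ref{definition_nonadaptive}, this collapses the total weight back to the original: $WT({}^k x) = \sum_{y : (x,y)\in R} w(x,y) = wt(x)$.

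Next I would unpack the load $V({}^k x, I)$ using the natural extension $W'({}^k x, {}^k y, I) = \sum_{j\in[k]\,:\,x(i_j)\neq y(i_j)} w'(x,y,i_j)$ — the canonical choice for which the product condition $W'({}^k x,{}^k y,I)\cdot W'({}^k y,{}^k x,I) \geq W({}^k x,{}^k y)^2$ descends from the analogous condition on the original scheme. Swapping the order of summation,
\[
V({}^k x, I) \;=\; \sum_{y\,:\,(x,y)\in R}\, \sum_{j\,:\,x(i_j)\neq y(i_j)} w'(x,y,i_j) \;=\; \sum_{j\in[k]} v(x, i_j).
\]

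With these two identities in hand, the claimed inequality reads
\[
\frac{wt(x)}{\sum_{j\in[k]} v(x, i_j)} \;\geq\; \frac{1}{k}\, \min_{j\in[k]} \frac{wt(x)}{v(x, i_j)},
\]
and after cancelling $wt(x) > 0$ this is equivalent to $\sum_{j\in[k]} v(x,i_j) \leq k\, \max_{j\in[k]} v(x,i_j)$, which is immediate since the sum of $k$ nonnegative reals is at most $k$ times their maximum.

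The only real obstacle is bookkeeping: pinning down the lifted $W'$ so that it both forms a valid weight scheme in the sense of Definition~\ref{definition_weight_scheme} and linearizes into the clean sum $\sum_j v(x,i_j)$. Once that choice is fixed, the substance of the proof reduces to the trivial averaging bound above.
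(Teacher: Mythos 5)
Your proof is correct and follows the same route as the source the paper cites for this lemma (the paper itself imports it from~\cite{nonadaptive_weighted_adversary} without reproving it): lift the weight scheme so that $WT({}^kx)=wt(x)$ and $V({}^kx,I)=\sum_{j\in[k]}v(x,i_j)$, then apply the averaging bound $\sum_j v(x,i_j)\leq k\max_j v(x,i_j)$. Your explicit choice of $W'$ and the Cauchy--Schwarz-style check that it remains a valid scheme fill in details the paper's Definition~\ref{definition_nonadaptive} leaves implicit, and nothing in the argument is amiss.
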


\subsection{Weighted adversary method with non-adaptive queries}

We prove the results on non-adaptive queries using the technique from~\cite{nonadaptive_weighted_adversary}, resulting in the lemma~below.

\begin{lemma}\label{lemma_nonadaptive}
    Consider a function $f$ with a weight scheme $w$. Then any \pdqpnaqname algorithm with \queryamount queries and \abilityamount non-adaptive measurements satisfies,
     \begin{align*}
        \queryamount\abilityamount = \Omega \left( \max_w \max_{s\in \{0,1\}} \min_{\substack{x,i \\ f(x)=s}} \frac{wt(x)}{v(x,i)} \right)
    \end{align*}
\end{lemma}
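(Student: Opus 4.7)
The plan is to follow the non-adaptive weighted adversary argument of~\cite{nonadaptive_weighted_adversary}, invoking our \pdqpname-specific positive weighted adversary (Lemma~\ref{lemma_weighted_adversary_partial_measurement}) in place of the standard \bqpname version at the end. The starting observation is that any \pdqpnaqname algorithm with $Q$ non-adaptive queries and $P$ non-collapsing measurements for $f$ re-encodes into an ordinary \pdqpname algorithm with a single adaptive query and $P$ non-collapsing measurements for the blown-up function ${}^Q f$ from Definition~\ref{definition_nonadaptive}: all $Q$ parallel, input-independent queries collapse into one query to $O_x^Q$ acting on a length-$Q$ index tuple, while the surrounding unitaries and partial measurement gates carry over unchanged.

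Given an arbitrary weight scheme $w$ for $f$, I would lift it to a weight scheme $W$ for ${}^Q f$ by setting $W({}^Q x, {}^Q y) = w(x,y)$ as in Definition~\ref{definition_nonadaptive}, together with the corresponding $W'$ satisfying Definition~\ref{definition_weight_scheme}. Lemma~\ref{lemma_nonadaptive_average_weights} then controls the lifted loads: for every index-tuple $I = (i_1, \dots, i_Q)$,
\begin{align*}
\frac{WT({}^Q x)}{V({}^Q x, I)} \geq \frac{1}{Q} \min_{j \in [Q]} \frac{wt(x)}{v(x, i_j)} \geq \frac{1}{Q} \min_i \frac{wt(x)}{v(x, i)}.
\end{align*}
Hence the lifted $X$-load and $Y$-load satisfy $V_X \leq Q \cdot v_X$ and $V_Y \leq Q \cdot v_Y$, where $v_X, v_Y$ are the loads of the original weight scheme on $f$.

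The key remaining step, and the main obstacle, is to extract the asymmetric \emph{maximum} of $1/v_X$ and $1/v_Y$ rather than the geometric mean $1/\sqrt{v_X v_Y}$ that a naive invocation of Lemma~\ref{lemma_weighted_adversary_partial_measurement} would yield. The essential reason this is available in the non-adaptive setting is that the lifted algorithm performs only a single query, so the progress-measure inequality in the proof of Lemma~\ref{lemma_progress_measure_partial_measurement} may be applied separately to each side of $R$ using an asymmetric Cauchy--Schwarz estimate --- controlling the $X$-side progress by $V_X$ alone and the $Y$-side by $V_Y$ alone --- rather than the symmetric AM--GM step that produces $\sqrt{V_X V_Y}$. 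Taking the better of the two yields $P = \Omega(\max(1/V_X, 1/V_Y))$ for the single-query lifted algorithm; plugging in $V_X \leq Q v_X$ and $V_Y \leq Q v_Y$ and then maximizing over weight schemes $w$ gives
\begin{align*}
Q \cdot P = \Omega \left( \max_w \max_{s \in \{0,1\}} \min_{\substack{x, i \\ f(x) = s}} \frac{wt(x)}{v(x,i)} \right),
\end{align*}
as required.
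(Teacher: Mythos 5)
Your reduction to the blown-up function ${}^Qf$, the lifting of the weight scheme via $W({}^Qx,{}^Qy)=w(x,y)$, and the use of Lemma~\ref{lemma_nonadaptive_average_weights} to relate the lifted loads to the original ones all match the paper's route (Lemmata~\ref{lemma_nonadaptive_weights} and~\ref{lemma_nonadaptive_final}). The gap is in the step you yourself flag as the main obstacle: you assert that an ``asymmetric Cauchy--Schwarz estimate'' applied to the progress measure of Lemma~\ref{lemma_progress_measure_partial_measurement} yields $P=\Omega(\max(1/V_X,1/V_Y))$ for the single-query lifted algorithm, but you never carry this out, and it is not available off the shelf. Proposition~\ref{proposition_weight_identity} and the chain of inequalities in Lemma~\ref{lemma_progress_measure_partial_measurement} produce the symmetric quantity $v_{max}=\sqrt{v_Xv_Y}$ precisely because the per-query progress $\sum 2w(x,y)|\alpha_{x,i}||\alpha_{y,i}|$ is split between the $x$-side and $y$-side by AM--GM; there is no separate ``$X$-side progress'' and ``$Y$-side progress'' to bound independently, and the fact that the lifted algorithm makes one query does not by itself change this. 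As written, the entire content of the non-adaptive improvement rests on an unproven claim.

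The paper obtains the maximum by a much more elementary device: since $WT({}^kx)\geq V({}^kx,I)$ for every input and index-tuple, both lifted load ratios are at least $1$, so their \emph{product} already dominates their \emph{maximum},
\begin{align*}
\frac{WT({}^kx)}{V({}^kx,I)}\cdot\frac{WT({}^ky)}{V({}^ky,I)} \;\geq\; \max\left(\frac{WT({}^kx)}{V({}^kx,I)},\,\frac{WT({}^ky)}{V({}^ky,I)}\right),
\end{align*}
and the right-hand side is controlled by $\frac{1}{k}\min_i\frac{wt(\cdot)}{v(\cdot,i)}$ via Lemma~\ref{lemma_nonadaptive_average_weights}. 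Feeding this into the symmetric bound of Lemma~\ref{lemma_nonadaptive_weights} and running a threshold argument on the block size $k$ (if $k$ is below the claimed bound, the product exceeds $1/C_\epsilon^2$ and hence $\queryamount\abilityamount>1$, i.e.\ one round of $k$ parallel queries cannot suffice) gives Lemma~\ref{lemma_nonadaptive_final} and hence the statement. If you want to rescue your write-up, replace the asymmetric-progress step with this observation; otherwise you must actually prove the asymmetric single-query adversary bound you are invoking, which is a nontrivial piece of new work.
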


In order to establish Lemma~\ref{lemma_nonadaptive}, we will first establish the following result which connects Lemma~\ref{lemma_weighted_adversary_partial_measurement} and weights for non-adaptive inputs.

\begin{lemma}\label{lemma_nonadaptive_weights}
    Let $w$ be a weight function for $f$. Then $W$ is a valid weight function for ${}^kf$ and any algorithm solving ${}^kf$ using \queryamount queries and \abilityamount non-collapsing measurements satisfies,
    \begin{align*}
        \queryamount\abilityamount \geq C_{\epsilon} \min_{{}^kx, {}^yx, I} \left(\frac{WT({}^kx) WT({}^ky)}{V({}^kx, I)V({}^ky,I)}\right)^{1/2}
    \end{align*}
    where $C_{\epsilon} = \sqrt{\frac{1-2\sqrt{\epsilon (1-\epsilon)}}{2}}$.
\end{lemma}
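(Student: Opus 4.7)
My approach is to lift the weight scheme $(w, w^\prime)$ on $f$ to one $(W, W^\prime)$ on ${}^k f$, whose inputs are tuples ${}^k x, {}^k y$ and whose queries are index-tuples $I = (i_1, \ldots, i_k)$, verify it satisfies Definition~\ref{definition_weight_scheme}, and then apply Lemma~\ref{lemma_weighted_adversary_partial_measurement} directly to ${}^k f$.

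The lift is the natural one. I would take $W({}^k x, {}^k y) = w(x, y)$ as in Definition~\ref{definition_nonadaptive}, and for every index-tuple $I$ at which ${}^k x(I) \neq {}^k y(I)$, set
\begin{equation*}
    W^\prime({}^k x, {}^k y, I) \;=\; \max_{j\,:\, x(i_j) \neq y(i_j)} w^\prime(x, y, i_j).
\end{equation*}
Positivity of $W$ and $W^\prime$ is inherited from that of $w$ and $w^\prime$. For the product constraint $W^\prime({}^k x, {}^k y, I)\, W^\prime({}^k y, {}^k x, I) \geq W^2({}^k x, {}^k y)$, I would lower bound each maximum by the value at a common index $j^*$ (for instance, the smallest $j$ with $x(i_j) \neq y(i_j)$), which yields $w^\prime(x, y, i_{j^*})\, w^\prime(y, x, i_{j^*}) \geq w^2(x, y)$ by the hypothesis on $(w, w^\prime)$. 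Thus $(W, W^\prime)$ is a valid weight scheme for ${}^k f$, which is the first half of the claim.

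For the quantitative bound, I would then invoke Lemma~\ref{lemma_weighted_adversary_partial_measurement} on ${}^k f$ with $(W, W^\prime)$. It gives $Q P = \Omega(1/V_{\max})$ where $V_{\max} = \sqrt{V_X V_Y}$ and $V_C = \max_{{}^k x \in C,\, I} V({}^k x, I)/WT({}^k x)$. Because the optimization separates over ${}^k x$ and ${}^k y$,
\begin{equation*}
    \frac{1}{V_{\max}^2} \;=\; \frac{1}{V_X V_Y} \;=\; \min_{{}^k x,\, {}^k y,\, I} \frac{WT({}^k x)\, WT({}^k y)}{V({}^k x, I)\, V({}^k y, I)},
\end{equation*}
and taking square roots produces the form in the statement.

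The main obstacle is pinning down the exact constant $C_\epsilon = \sqrt{(1 - 2\sqrt{\epsilon(1-\epsilon)})/2}$. A naive substitution from Lemma~\ref{lemma_weighted_adversary_partial_measurement} yields a leading factor of $1 - 2\sqrt{\epsilon(1-\epsilon)}$, so reconciling the two requires a slightly more careful bookkeeping of the progress measure on the product function (e.g., tracking real parts of overlaps, or using the slack already present in Proposition~\ref{proposition_polynomial_bound}), or else importing the constant directly from the non-adaptive analysis of~\cite{nonadaptive_weighted_adversary}. The structural portion of the proof, namely verifying $(W, W^\prime)$ and invoking the adversary lemma, is otherwise routine.
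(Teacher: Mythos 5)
Your approach matches the paper's: its entire proof is the one-line observation that $W({}^kx,{}^ky)=w(x,y)$ makes $W$ a valid weight scheme for ${}^kf$ (via Definition~\ref{definition_nonadaptive}), so Lemma~\ref{lemma_weighted_adversary_partial_measurement} applies directly, and your explicit construction and verification of $W^\prime$ merely fills in detail the paper leaves implicit. The constant $C_\epsilon$ you flag as the main obstacle is not derived in the paper either---it is simply chosen so that Lemma~\ref{lemma_nonadaptive_final} closes with $\queryamount\abilityamount>1$, and it is immaterial to the asymptotic statements.
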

\begin{proof}
    By Definition~\ref{definition_nonadaptive}, as $W({}^kx, {}^ky)=w(x,y)$, we may apply Lemma~\ref{lemma_weighted_adversary_partial_measurement} to obtain the bound.
\end{proof}

We will prove the following lemma:
\begin{lemma}\label{lemma_nonadaptive_final}
    Let $w$ be a valid weight function and $x,y$ such that $f(x)\neq f(y)$. Then,
    \begin{align*}
    \queryamount^{\nonadaptivelabel} \abilityamount^{\nonadaptivelabel} \geq C^2_{\epsilon} \min_{x,y}\; \max\; \left(\min_i \frac{wt(x)}{v(y,i)}, \min_i \frac{wt(y)}{v(x,i)}\right)
    \end{align*}
\end{lemma}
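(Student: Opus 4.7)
The plan is to reduce a non-adaptive \pdqpnaqname algorithm to a single-query algorithm against the parallelized function $^k f$ and then invoke the adaptive bound Lemma~\ref{lemma_nonadaptive_weights}. First, any non-adaptive algorithm with $\queryamount^{\nonadaptivelabel}$ queries and $\abilityamount^{\nonadaptivelabel}$ non-collapsing measurements can be re-described as an adaptive algorithm with $\queryamount=1$ and $\abilityamount=\abilityamount^{\nonadaptivelabel}$, but calling the parallel oracle $O_x^{\queryamount^{\nonadaptivelabel}}$ for the function $^k f$ with $k=\queryamount^{\nonadaptivelabel}$. This is the same reduction used in~\cite{nonadaptive_weighted_adversary} and immediately places us inside the regime of Lemma~\ref{lemma_nonadaptive_weights}.

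Second, plugging $\queryamount=1$, $\abilityamount=\abilityamount^{\nonadaptivelabel}$ and $k=\queryamount^{\nonadaptivelabel}$ into Lemma~\ref{lemma_nonadaptive_weights} gives
\[
\abilityamount^{\nonadaptivelabel} \;\geq\; C_\epsilon \min_{{}^kx,\,{}^ky,\,I}\sqrt{\frac{WT({}^kx)\,WT({}^ky)}{V({}^kx,I)\,V({}^ky,I)}}.
\]
Applying Lemma~\ref{lemma_nonadaptive_average_weights} to both factors in the square root pulls out the factor $\tfrac{1}{k}=\tfrac{1}{\queryamount^{\nonadaptivelabel}}$ from each side and converts $WT/V$ into a minimum over $i$ of $wt(\cdot)/v(\cdot,i)$. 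Multiplying through by $\queryamount^{\nonadaptivelabel}$ then yields a preliminary geometric-mean bound of the form $\queryamount^{\nonadaptivelabel}\abilityamount^{\nonadaptivelabel} \geq C_\epsilon\sqrt{A_x\,A_y}$, where $A_x$ and $A_y$ are the relevant per-input minima.

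Finally, to promote this geometric mean into the $\max$ that appears in the lemma's conclusion, I would exploit the fact that the adversary bound holds for every valid weight scheme simultaneously. For each pair $(x,y)$ with $f(x)\neq f(y)$, I would build two asymmetric weight schemes: one whose support is arranged so that the ratio $\min_i wt(x)/v(y,i)$ dominates and another tuned so that $\min_i wt(y)/v(x,i)$ dominates; the crossed indices in the stated bound arise naturally because in such asymmetric schemes the load at input $i$ on one side is driven entirely by pairs involving the opposite input. Applying the preliminary bound under each of the two schemes and taking the maximum gives the claimed form, with the constant $C_\epsilon^2$ reflecting the two separate invocations of Lemma~\ref{lemma_nonadaptive_weights}.

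The main obstacle I expect is this last step. The constraint $w'(x,y,i)\,w'(y,x,i)\geq w^2(x,y)$ from Definition~\ref{definition_weight_scheme} ties the two sides of every pair together, so designing the asymmetric schemes so that the crossed ratios come out cleanly, without degrading the constant beyond $C_\epsilon^2$, requires careful bookkeeping of how $w$ and $w'$ are normalized on each side. The first two steps are essentially mechanical once the reduction to $^k f$ is in place.
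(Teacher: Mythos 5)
Your first two steps match the paper's route: the non-adaptive algorithm is treated as a one-(parallel-)query algorithm for ${}^kf$ with $k=\queryamount^{\nonadaptivelabel}$, and Lemma~\ref{lemma_nonadaptive_weights} together with Lemma~\ref{lemma_nonadaptive_average_weights} is the engine. The divergence, and the gap, is in how the $\max$ is produced. You apply Lemma~\ref{lemma_nonadaptive_average_weights} to \emph{both} factors $WT({}^kx)/V({}^kx,I)$ and $WT({}^ky)/V({}^ky,I)$, which leaves you with a geometric mean $\sqrt{A_xA_y}$; since $\sqrt{A_xA_y}\leq\max(A_x,A_y)$ always, this intermediate bound is strictly weaker than the target and cannot be upgraded to the $\max$ for free. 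The paper avoids this with an elementary observation you are missing: each ratio $WT/V$ is at least $1$, so the \emph{product} of the two ratios is already at least the \emph{maximum} of the two; one then applies Lemma~\ref{lemma_nonadaptive_average_weights} only to the larger factor, pulling out a single $1/k$ and landing directly on $\frac{1}{k}\max(\min_j wt(x)/v(x,i_j),\,\min_l wt(y)/v(y,i_l))$. The argument is then closed by contradiction: if $k$ were below the claimed threshold, the minimized product $WT({}^kx)WT({}^ky)/(V({}^kx,I)V({}^ky,I))$ would exceed $1/C_\epsilon^2$, which Lemma~\ref{lemma_nonadaptive_weights} forbids; the constant $C_\epsilon^2$ comes from squaring the square root in that lemma, not from two separate invocations of it.

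Your third step, building two asymmetric weight schemes per pair to force the crossed ratios $wt(x)/v(y,i)$, would not work and is also unnecessary. It would not work because the lemma is stated for a fixed, given weight function $w$, and its conclusion must be expressed in that $w$'s quantities; bounds obtained under different weight schemes say nothing about the original $wt$ and $v$, and the adversary quantities never naturally couple $wt(x)$ with $v(y,i)$. It is unnecessary because the crossed indices in the statement appear to be a typo: the paper's own proof, and the uncrossed form $\max\{m/l,\,m'/l'\}$ appearing in Corollary~\ref{corollary_basic_adversary} and in all the applications, work with $\min_i wt(x)/v(x,i)$ and $\min_i wt(y)/v(y,i)$. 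You should be proving the uncrossed version.
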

\begin{proof}
    Let $k\in \mathbb{N}$ be such that
    \begin{align}
        k < C^2_{\epsilon} \min_{x,y} \max \left(\min_i \frac{wt(x)}{v(x,i)}, \min_i \frac{wt(y)}{v(y,i)}\right)
    \end{align}
    We shall show that performing $k$ non-adaptive queries would require us to repeat the computation more than once, implying that $\queryamount^{\nonadaptivelabel}\abilityamount^{\nonadaptivelabel}>k$. By using Lemma~\ref{lemma_nonadaptive_average_weights} and the fact that $WT({}^kx)\geq V({}^kx,I)$ for any $x$ and $I$, we have that,
    \begin{align*}
        \frac{WT({}^kx)}{V({}^kx, I)}\frac{WT({}^ky)}{V({}^ky, I)} &\geq \max \left(\frac{WT({}^kx)}{V({}^kx, I)},\frac{WT({}^ky)}{V({}^ky, I)}\right)\\
        &\geq \frac{1}{k} \max \left(\min_{j\in [k]}\frac{wt(x)}{v(x,i_j)}, \min_{l\in [k]}\frac{wt(y)}{v(y,i_l)}\right)
    \end{align*}
    Therefore we have that,
    \begin{align*}
        \min_{{}^kx, {}^ky, I} \frac{WT({}^kx)}{V({}^kx, I)}\frac{WT({}^ky)}{V({}^ky, I)} &\geq \frac{1}{k} \min_{{}^kx, {}^ky, I} \max \left(\min_{j\in [k]}\frac{wt(x)}{v(x,i_j)}, \min_{l\in [k]}\frac{wt(y)}{v(y,i_l)}\right)\\
        &\geq \frac{1}{k} \min_{x, y} \max \left(\min_i \frac{wt(x)}{v(x,i)}, \min_i \frac{wt(y)}{v(y,i)}\right)\\
        & \geq \frac{1}{C^2_{\epsilon}}
    \end{align*}
    Where we applied our assumption on $k$. Therefore according to Lemma~\ref{lemma_nonadaptive_weights}, $\queryamount^{\nonadaptivelabel}\abilityamount^{\nonadaptivelabel} > 1$, proving the lemma.
\end{proof}

Lemma~\ref{lemma_nonadaptive_final} directly implies Lemma~\ref{lemma_nonadaptive}.

\section{Lower-bounding quantum computation with copies}

In this section, we apply the ideas from Lemma~\ref{lemma_weighted_adversary_partial_measurement} to \copybqpname, quantum computation which is able to create a copy of any state. Another way to consider \copybqpname is as the computational class representing a world where the no-cloning theorem does not hold. The point of this section is to explicitly showcase the power difference between non-collapsing measurements and copying a state. As most steps in our analysis are identical, we only focus on the differences.

\subsection{Computational model with copies}

The key difference between \pdqpname and \copybqpname is that the latter has the ability to apply unitaries to the copies after creating them, while \pdqpname must immediately collapse them. As these copies may become entangled, purifying the algorithm in a similar manner as in Definition~\ref{definition_pdqp_our}, meaning explicitly creating copies of each state in order to analyze it, will require additional registers. Therefore, as we explain below, instead of using $O(\abilityamount)$ registers as in \pdqpname, we will need $O(2^{\abilityamount})$ registers.\footnote{\abilityamount denotes either the number of non-collapsing measurements or copies. This will be obvious from context.}

\begin{figure}
    \centering
    \input{tikz_cbqp}
    \caption{Example of purifying a \copybqpname circuit.}\label{figure_cbqp}
\end{figure}

Assume we perform \abilityamount copies in total. Let $r_0$ be the starting register and for $i>0$, $r_i$ be its copies at step $i$. Without loss of generality, assume we only copy and apply the oracle to $r_0$. The gate $C_i$ symbolizes a copy gate, which creates a copy, allowing us to apply unitaries across all registers $r_j$ for $j\leq i$. Prior to applying the $C_i$ gate, we must ensure that $r_i$ and $r_0$ are copies, meaning that similarly to Definition~\ref{definition_pdqp_our}, we apply unitary $U$ in parallel to both. Finally, we account for the effect of partial measurement by using a partial measurement gate $M^*_j$, ensuring that partial measurement results before applying $C_i$ are the same across all registers. When adding queries to this model, we follow the same steps as in Definition~\ref{definition_pdqp_our}.

Finally, as each copy must mimic the register it is copying, we need to add further purification registers $p_{i,j}$ for each copy $i$. These registers mimic the process of the register until it has been copied. As each copy may be entangled, with each additional copy, the amount of purification registers doubles. In order to describe copy $i$ of the computation, we require $2^i$ registers including the register $r_i$. Therefore the total number of registers is $r^{\abilityamount}$. An example of a circuit with the ability to copy and its purification is in Figure~\ref{figure_cbqp}.

\subsection{Adversary lower-bound}

We prove a version of Lemma~\ref{lemma_progress_measure_partial_measurement} for $\mathsf{CBQP}$.

\begin{lemma}\label{lemma_copy}
    Let $f$ be a boolean function with a valid weight scheme and its associated maximum load $v_{max}$ as in Definition~\ref{definition_weight_load}. Then the progress made by a \copybqpname algorithm each query $t$ is bounded by,
    \begin{align*}
        \Phi(t-1)-\Phi(t)\leq 2^{\abilityamount} v_{max} \Phi (0)
    \end{align*}
    where \abilityamount is the number of copies.
\end{lemma}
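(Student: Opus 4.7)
The plan is to mirror the proof of Lemma~\ref{lemma_progress_measure_partial_measurement}, substituting the \copybqpname-purified circuit from Figure~\ref{figure_cbqp} for $C^*$. Three ingredients are needed: absorb unitaries via unitary invariance of fidelity; absorb the adjusted partial-measurement gates $M^*_{i,j}$ via a \copybqpname analogue of Lemma~\ref{lemma_fidelity_partial_measurement}; and bound the inner-product change caused by a single oracle application using Proposition~\ref{proposition_polynomial_bound}.

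First I would prove the fidelity-monotonicity analogue for $M^*_{i,j}$ along the same lines as Lemma~\ref{lemma_fidelity_partial_measurement}. Take a pair of Uhlmann-optimal purifications of the two input mixed states, apply the post-selected measurement gate with its probability-rescaling factors $d_{i,n} \geq 1$, and observe that the resulting inner product can only grow, so fidelity is nondecreasing. The only bookkeeping change from the \pdqpname argument is that $M^*_{i,j}$ now acts on a varying subset of the $2^{\abilityamount}$ registers (precisely those still required to agree on a given outcome in order to maintain copy consistency) rather than on a contiguous block of $r_i$ registers.

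Next, with unitaries and measurements absorbed, I would reduce to tracking the inner product between pure states in the purified circuit across the single oracle step. The key observation is that the oracle, applied to $r_0$ in the original circuit, must be replicated in parallel across every register of the purified circuit that is supposed to mimic $r_0$ at time $t$; by the construction preceding the lemma there are up to $2^{\abilityamount}$ such registers (the actual copy registers $r_j$ together with the ghost purification registers $p_{i,j}$). Writing each oracle-acting register as $\sum_i \alpha_{x,i} \ket{i}_{\mathcal{Q}} \ket{\phi_{x,i}}_{\mathcal{W}}$, the inner product factors as a product over the unaffected registers times $\hat{S}_i^{K}$ before the query and times $(\hat{S}_{x(i)=y(i)} - \hat{k})^{K}$ after, with $K \leq 2^{\abilityamount}$ and $\hat{S}_i, \hat{k}$ as in the proof of Lemma~\ref{lemma_progress_measure_partial_measurement}.

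Finally, Proposition~\ref{proposition_polynomial_bound} with $r = \hat{S}_i$, $s = 2\hat{k}$, and $k = K$ gives $\abs{\hat{S}_i^{K} - (\hat{S}_i - 2\hat{k})^{K}} \leq 2K \hat{k} \leq 2 \cdot 2^{\abilityamount}\, \hat{k}$. Summing over $(x,y) \in R$ and applying Proposition~\ref{proposition_weight_identity} then yields the claimed $\Phi(t-1) - \Phi(t) \leq 2^{\abilityamount} v_{max} \Phi(0)$. The main obstacle I anticipate is justifying the exponent in the third step: in \pdqpname, each non-collapsing measurement result was immediately collapsed, so the copy block factored cleanly as $\ket{\psi_{x,t}}_{\mathcal{O}}^{\otimes D}$; in \copybqpname, post-copy unitaries can entangle the actual registers with one another and with the purification registers, so one must carefully argue that at the instant of the oracle query the purified state still decomposes into branches in which exactly one register per branch is being queried, and that the number of such branches is bounded by $2^{\abilityamount}$ rather than by the polynomial $\abilityamount$.
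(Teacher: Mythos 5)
Your proposal follows essentially the same route as the paper: the paper's own proof of this lemma is a terse invocation of the argument of Lemma~\ref{lemma_progress_measure_partial_measurement} with $D\leq 2^{\abilityamount}$ in place of $D\leq \abilityamount$, using exactly the chain you describe (fidelity monotonicity of the adjusted measurement gates, unitary invariance, Proposition~\ref{proposition_polynomial_bound} with exponent $D$, and Proposition~\ref{proposition_weight_identity}). The entanglement subtlety you flag at the end is not addressed by the paper either --- its proof simply writes $\hat{S}_i^D$ without further justification --- so your write-up is, if anything, more explicit than the source about where the argument needs care.
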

\begin{proof}
    First, we shall ignore the effect of unitaries and partial measurement by the same arguments and definitions of variables as in Lemma~\ref{lemma_progress_measure_partial_measurement}. In this case $D\leq 2^{\abilityamount}$. We obtain the following result,

    \begin{align}
        \Phi(t-1) - \Phi(t) \leq 
        &\sum_{(x,y)\in R} w(x,y) \abs{ \bra{\psi_{x,t}^{\prime *}}\ket{\psi_{y,t}^{\prime *}} - \bra{\psi_{x,t}^*}\ket{\psi_{y,t}^*}}\label{copy_weight_1}\\
        \leq &\sum_{(x,y)\in R} w(x,y)\abs{\hat{S}_i^D - \left(\hat{S}_i - 2\hat{k}\right)^D} \label{copy_weight_2}\\
        \leq & 2^{\abilityamount}v_{max}\Phi(0)\label{copy_weight_3}
    \end{align}
\end{proof}

We find the following relationship between the number of queries and copies.

\begin{lemma}\label{lemma_copy_adversary}
    Let $f$ be a boolean function with a weight scheme and its associated maximum load $v_{max}$. A $\mathsf{CBQP}$ algorithm with \queryamount queries and \abilityamount copies satisfies,
        \begin{align*}
            \queryamount 2^{\abilityamount} = \Omega\left(\frac{1}{v_{max}}\right)
        \end{align*}
\end{lemma}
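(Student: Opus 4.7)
The plan is to mirror exactly the argument used to deduce Lemma~\ref{lemma_weighted_adversary_partial_measurement} from Lemma~\ref{lemma_progress_measure_partial_measurement}, now using Lemma~\ref{lemma_copy} as the single-query progress bound. All of the real work (purifying the circuit into a parallel form with $2^{\abilityamount}$ registers, bounding the drop in fidelity across one query with the aid of Proposition~\ref{proposition_polynomial_bound}, and invoking Proposition~\ref{proposition_weight_identity}) has already been done inside Lemma~\ref{lemma_copy}; what remains is simply to telescope across the \queryamount queries and to convert the endpoint gap in $\Phi$ into the desired lower bound.

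Concretely, I would proceed as follows. First, I would set up the same progress measure $\Phi(t)=\sum_{(x,y)\in R} w(x,y) F(\rho_{x,t},\rho_{y,t})$ and recall the standard fact that an algorithm which succeeds with probability at least $1-\epsilon$ must satisfy $\Phi(\queryamount)\leq 2\sqrt{\epsilon(1-\epsilon)}\,\Phi(0)$, since each pair $(x,y)\in R$ has $f(x)\neq f(y)$ and distinguishability forces the fidelity of the final states to be small. Second, I would apply Lemma~\ref{lemma_copy} at each of the \queryamount query steps, yielding
\begin{align*}
\Phi(0)-\Phi(\queryamount) \;\leq\; \queryamount\cdot 2^{\abilityamount}\, v_{max}\,\Phi(0).
\end{align*}
Third, I would combine these two inequalities to get $(1-2\sqrt{\epsilon(1-\epsilon)})\Phi(0)\leq \queryamount\cdot 2^{\abilityamount}\,v_{max}\,\Phi(0)$, divide through by $\Phi(0)>0$, and rearrange to conclude $\queryamount\cdot 2^{\abilityamount} = \Omega(1/v_{max})$, as required.

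There is no real obstacle here: the only subtlety is that the copy gates can be applied at any step of the algorithm, so \abilityamount in Lemma~\ref{lemma_copy} must be interpreted as the \emph{total} number of copies ever made in the circuit, not the number available at the particular step $t$. Since Lemma~\ref{lemma_copy} already states its bound in terms of the global count \abilityamount, the same bound is valid uniformly for every query step $t\in[\queryamount]$, and the telescoping sum is immediate. Thus the proof is a three-line corollary, exactly analogous to Lemma~\ref{lemma_weighted_adversary_partial_measurement}, with the factor $\abilityamount$ replaced by $2^{\abilityamount}$ that reflects the doubling of purification registers per additional copy noted in the construction preceding Lemma~\ref{lemma_copy}.
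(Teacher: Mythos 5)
Your proposal matches the paper exactly: the paper states Lemma~\ref{lemma_copy_adversary} as an immediate consequence of Lemma~\ref{lemma_copy}, obtained by the same telescoping argument used to derive Lemma~\ref{lemma_weighted_adversary_partial_measurement} from Lemma~\ref{lemma_progress_measure_partial_measurement}, i.e.\ combining $\Phi(\queryamount)\leq 2\sqrt{\epsilon(1-\epsilon)}\,\Phi(0)$ with the per-query bound $\Phi(t-1)-\Phi(t)\leq 2^{\abilityamount}v_{max}\Phi(0)$. Your treatment of \abilityamount as the global count of copies is also consistent with how the paper sets up Lemma~\ref{lemma_copy} (where $D\leq 2^{\abilityamount}$), so there is nothing to add.
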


Lemma~\ref{lemma_copy_adversary} implies that the algorithm in~\cite{grover_search_no_signaling_principle} to solve search is optimal and explains the discrepancy between \copybqpname and \pdqpname.

\section{Bounds on problems with non-collapsing measurements}

\subsection{Proofs of applications of the lower-bounding techniques}\label{subsection_proof_application}

We shall apply Lemmata~\ref{lemma_weighted_adversary_partial_measurement} and~\ref{lemma_nonadaptive_final} on concrete problems in order to compare the two models. First, let us note that the positive-weighted adversary technique may be reduced to the basic adversary technique by setting the weights $w(x,y)$ to 1.

\begin{corollary}\label{corollary_basic_adversary}
    Let $f:\{0,1\}^n\rightarrow \{0,1\}$ be a boolean function, $X,Y$ be two subsets of \kinputs{1} and \kinputs{0} respectively and $R$ a relation $R\subseteq X \times Y$. Furthermore, let $m,m^\prime, l, l^\prime$ be the same as in Result~\ref{result_adversary}. Then a \pdqpname query algorithm solving $f$ with \queryamount queries and \abilityamount non-collapsing measurements satisfies,
    \begin{align*}
        \queryamount\abilityamount = \Omega\left( \sqrt{\frac{mm^\prime}{ll^\prime}} \right)
    \end{align*}
    Furthermore, a \pdqpname query algorithm with \emph{non-adaptive} queries solving $f$ with $\queryamount^{\nonadaptivelabel}$ queries and $\abilityamount^{\nonadaptivelabel}$ non-collapsing measurements satisfies,
    \begin{align*}
        \queryamount^{\nonadaptivelabel} \abilityamount^{\nonadaptivelabel} = \Omega \left( \max\left\{\frac{m}{l}, \frac{m^\prime}{l^\prime}\right\}\right)
    \end{align*}
\end{corollary}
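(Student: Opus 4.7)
The plan is to verify that the uniform choice of weights satisfies Definition~\ref{definition_weight_scheme}, to compute $v_{max}$ in closed form under this choice, and then to invoke Lemma~\ref{lemma_weighted_adversary_partial_measurement} and Lemma~\ref{lemma_nonadaptive_final} directly.

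First I would set $w(x,y)=1$ for all $(x,y)\in R$ and $w'(x,y,i)=1$ for all $(x,y)\in R$ and all $i$ with $x(i)\neq y(i)$. The four positivity and product conditions of Definition~\ref{definition_weight_scheme} are immediate since $w'(x,y,i)\,w'(y,x,i)=1=w^2(x,y)$. With these choices, the scheme reduces to combinatorial counting: for $x\in X$, $wt(x)$ is the number of $y\in Y$ with $(x,y)\in R$, and $v(x,i)$ is the number of such $y$ that additionally disagree with $x$ at coordinate $i$; analogously for $y\in Y$. By the definitions of $m,m',l,l'$ in Result~\ref{result_adversary}, this gives $wt(x)\ge m$ and $v(x,i)\le l$ for every $x\in X,i\in[N]$, and $wt(y)\ge m'$ and $v(y,i)\le l'$ for every $y\in Y,i\in[N]$.

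Next I would plug these estimates into Definition~\ref{definition_weight_load}. This yields
\begin{align*}
v_X \;=\; \max_{x\in X,\,i\in[N]} \frac{v(x,i)}{wt(x)} \;\le\; \frac{l}{m}, \qquad v_Y \;\le\; \frac{l'}{m'},
\end{align*}
so $v_{max}=\sqrt{v_X v_Y}\le \sqrt{ll'/(mm')}$. Substituting into Lemma~\ref{lemma_weighted_adversary_partial_measurement} immediately gives $\queryamount\abilityamount=\Omega(1/v_{max})=\Omega(\sqrt{mm'/(ll')})$, which is the first claim.

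For the non-adaptive claim I would plug the same uniform scheme into Lemma~\ref{lemma_nonadaptive_final}. With $w=w'=1$, the inner quantities become $wt(x)/v(x,i)\ge m/l$ for every $x\in X,i$ and $wt(y)/v(y,i)\ge m'/l'$ for every $y\in Y,i$. Taking $\min_i$ preserves these lower bounds, and taking the outer $\min_{x,y}$ with $f(x)\neq f(y)$ followed by the $\max$ yields
\begin{align*}
\queryamount^{\nonadaptivelabel}\abilityamount^{\nonadaptivelabel} \;\ge\; C_\epsilon^{2}\,\max\!\left\{\frac{m}{l},\,\frac{m'}{l'}\right\},
\end{align*}
as required. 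There is no genuine obstacle in the argument; the only point that needs care is bookkeeping the direction of the inequalities when passing from the lower bounds on $wt$ and upper bounds on $v$ to an upper bound on $v_{max}$ (so that the reciprocal in Lemma~\ref{lemma_weighted_adversary_partial_measurement} becomes a lower bound in the correct direction), and verifying that the choice $w=w'=1$ really is a valid weight scheme under Definition~\ref{definition_weight_scheme}.
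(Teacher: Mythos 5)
Your proposal is correct and follows essentially the same route as the paper's own (much terser) proof: set $w\equiv 1$, observe $wt(x)\ge m$ (resp.\ $m'$) and $v(x,i)\le l$ (resp.\ $l'$), and plug into Lemma~\ref{lemma_weighted_adversary_partial_measurement} and Lemma~\ref{lemma_nonadaptive_final}. Your additional bookkeeping (verifying the weight-scheme axioms for $w=w'=1$ and tracking the inequality directions through $v_{max}$) is sound and merely makes explicit what the paper leaves implicit.
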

\begin{proof}
    In both Lemmata~\ref{lemma_weighted_adversary_partial_measurement} and~\ref{lemma_nonadaptive_final}, let $w(x,y) = 1$. Then $w(x)\geq m$ or $m^\prime$ and $v(x,i)\leq l$ or $l^\prime$ depending on whether $x\in X$ or $x\in Y$.
\end{proof}

Let us obtain the values $m,m^\prime, l, l^\prime$ for the search, majority, parity, collision and element distinctness problems. For each problem, we shall define the sets $X, Y$ and their relation $R$ which maximizes $m, m^\prime$ while minimizing $l, l^\prime$. In addition to that, we also describe their algorithm. As has been done during the entirety of the document, we let $N$ be the size of the input. Finally, we note that a summary of the results described below may be found in Table~\ref{table_bounds}.

\textbf{Search problem}. We know that the query complexity of the search problem in \pdqpname is $\Tilde{\Theta}(N^{1/3})$ due to the algorithm in~\cite{space_above_bqp}, while our Lemma~\ref{lemma_lowerbound_search} proves that this algorithm is tight up to logarithmic factors. However, the same technique does not apply to \pdqpname with non-adaptive queries. We may split the inputs for search as follows:
\begin{itemize}
    \item Let $X=\{0\}$ as $f(x)=0$ if and only if $x=0$.
    \item Let $Y=\{y: |y|=1\}$ as these are the only values which only differ by 1 character from any input in $X$.
\end{itemize}

Therefore we find that $m=N$ as we may choose any index $i\in[N]$ to go from $X$ to $Y$. Furthermore, $m^\prime = l = l^\prime = 1$ as each of these connections between $X$ and $Y$ only occur on one index. By Corollary~\ref{corollary_basic_adversary}, the query complexity of the search problem in the \pdqpname setting with non-adaptive measurements is $\Omega(\sqrt{N})$. This is tight as one may run an algorithm where we partition $[N]$ into $\sqrt{N}$ sets and query each separately. Afterwards, we may perform non-collapsing measurements on all partitions at once which, with $\Tilde{O}(\sqrt{N})$ would return a marked item with high probability if there exists one.

\textbf{Majority problem}. To lower-bound the query complexity of the majority problem, we shall split the sets $X$ and $Y$ ones whose Hamming weight differs by one. Due to the structure of the problem, this is only possible when the Hamming weight is around $N/2$. Therefore let $X=\{x:|x|=N/2\}$, $Y: \{y: |y|=N/2+1\}$ and $(x,y)\in R$ if and only if $|\{x(i): x(i)\neq y(i)\}| = 1$. This means that $m=N/2$, $m^\prime=N/2+1$ and $l=l^\prime = 1$. Therefore, by Corollary~\ref{corollary_basic_adversary}, the lower-bounds for \pdqpname is $\Omega(\sqrt{N})$ in both \pdqpname cases, regardless of the adaptivity of queries. 

Similarly to the non-adaptive algorithm for search, the best algorithm for the problem is performing $\sqrt{N}$ queries partitions of the input, each of size $\sqrt{N}$. Afterwards, we may perform $\Tilde{O}(\sqrt{N})$ non-collapsing measurements. By the coupon collector's problem, we will obtain all values with high probability, meaning the complexity of the problem is $\Tilde{O}(\sqrt{N})$. Finally, as both the same algorithm for the upper-bound and pair of sets $(X,Y)$ for the lower-bound may be applied for the parity problem, we obtain the same result.

\textbf{Element Distinctness problem}. To analyze the element distinctness problem, we let $X$ be the set of injective functions and $Y$ the set of functions which are almost injective. By almost injective, we mean that there exists exactly one pair of inputs $(x,x^\prime)$ such that $f(x)=f(x^\prime)$. Then we may define the relation between these sets of inputs as those which only require one change of outputs. Explicitly, $(f,f^\prime)\in R$ if and only if $|\{x: f(x)\neq f^\prime(x)\}| = 1$. Thus every injective function may become almost injective by changing any one of its image to another image which already has a preimage. Therefore $m=N$ and $l=1$. On the other hand, one may only choose one of two indices if they want to turn an almost injective function into an injective function by changing a single index. Therefore $m^\prime = 2$ and $l^\prime = 1$. Thus we find that the element distinctness problem in the \pdqpname query models with and without adaptive queries have a $\Omega(N^{1/4})$ and $\Omega(\sqrt{N})$ lower-bound respectively. Finally, the algorithm is the same as for non-adaptive search and the majority problem, giving us a query bound of $\Tilde{O}(\sqrt{N})$.

\subsection{Polynomial method lower-bounding technique}\label{subsection_polynomial_method}

We applied a \bqpname lower-bounding technique to \pdqpname, but it is not tight in all situations. This may be due to the fact that the lower-bound establishes a $\queryamount\abilityamount$ relationship instead of $\queryamount+\abilityamount$. We note that the lower-bound on search in~\cite{space_above_bqp} had the same issue. It would be interesting if one was able to apply the polynomial method to the setting. One approach towards this would be going in the spirit of Definition~\ref{definition_pdqp_our}. Let $p_i(x)$ be the polynomial associated with the amplitudes of the state at register $i$. One may attempt to restrict each $p_i(x)$ in two ways. First, for each $i$, one would ensure that $p_i(x)$ builds on $p_{i-1}(x)$ in that if we omitted the last query from $p_i(x)$, we would obtain $p_{i-1}(x)$. Second, the polynomial $p(x)$ describing the amplitude of the entire state is a product of all $p_i(x)$ with a caveat. Without partial measurement, $p(x)=\prod_i p_i(x)$. However, after partial measurement, $p_i(x)$ would become the sum of several polynomials, $p_i(x)=\sum_j p_i^j(x)$ where the superscript $j$ signals a measurement result. In order to ensure partial measurement remains consistent along all polynomials, we would require $p(x)=\sum_j \prod_i p_i^j(x)$. We note that a list of other open problems we found interesting may be found in Subsection~\ref{subsection_open_problems}.

\bibliography{main}
\end{document}